\newtheorem{thm}[subsection]{Theorem}
\newtheorem*{thm*}{Theorem}
\newtheorem{cor}[subsection]{Corollary}
\newtheorem{lem}[subsection]{Lemma}
\newtheorem*{lem*}{Lemma}
\newtheorem*{mono*}{Monotonicity Lemma}
\newtheorem{prop}[subsection]{Proposition}
\newtheorem*{prop*}{Proposition}
\newtheorem{conj*}{Conjecture}
\theoremstyle{definition}
\newtheorem{defn}[subsection]{Definition}
\newtheorem{remark}[subsection]{Remark}
\renewcommand\mod[1]{\ (\mathop{\rm mod}#1)}
\newcommand{\quash}[1]{}
\newcommand{\Diag}{\text{Diag}}
\newcommand{\CC}{\mathbb C}\newcommand{\PP}{\mathbb P}
\newcommand{\RR}{\mathbb R}\newcommand{\QQ}{\mathbb Q}
\newcommand{\ZZ}{\mathbb Z}
\newcommand{\JJ}{\mathbb J}
\renewcommand{\Diag}{\mathop{\rm diag}}
\newcommand{\TT}{\mathbb T}
\newcommand{\Hess}{\mathop{\rm{Hess}}}
\newcommand{\Cu}{{\mathbb J}}  
\newcommand{\sslash}{\mathbin{/\mkern-6mu/}}
\newcounter{remarkscounter}
\newcommand{\ct}[1]{\textcolor{black}{#1}}
\begin{document}

\title{Nodal count for a random signing of a graph with disjoint cycles}
\author{Lior Alon}\address{Massachusetts Institute of Technology, Cambridge MA}
\author{Mark Goresky}\address{Institute for Advanced Study, Princeton N.J.}

\begin{abstract}
Let $G$ be a simple, connected graph on $n$ vertices, and further assume that $G$ has disjoint cycles (see \S 3). Let $h$ be a real symmetric matrix supported on $G$ (for example, a discrete Schr\"odinger operator). The eigenvalues of $h$ are ordered increasingly, $\lambda_1 \le \cdots \le \lambda_n$, and if $\phi$ is the eigenvector corresponding to $\lambda_k$, the nodal (edge) count $\nu(h,k)$ is
the number of edges $(rs)$ such that $ h_{rs}\phi_{r}\phi_{s}>0$.  The nodal surplus is $\sigma(h,k)
= \nu(h,k) - (k-1)$. Let $h'$ be a random signing of $h$, that is a real symmetric matrix obtained from $h$ by changing
the sign of some of its off-diagonal elements. If $h$ satisfies a certain generic condition (cf.~\S 1.2) we show for each $k$ that the nodal surplus has a binomial distribution $\sigma(h',k)\sim Bin(\beta,\frac{1}{2})$.

Part of the proof
follows ideas developed by the first author together with Ram Band and
Gregory Berkolaiko in a joint unpublished project studying a similar
question on quantum graphs. 
\end{abstract}
\keywords{spectral graph theory, nodal count, Morse theory}
\subjclass{05C22, 05C50, 47A10, 57R70, 58K05, 49J52, 57Z05}
\maketitle

\section{Introduction}\label{sec-intro}
\subsection{}
Let $G=G([n],E)$ be a simple graph on $n$ ordered vertices $[n]:=\{1,2,\ldots,n\}$ with a set of edges $E$. Write $r \sim s$ if the vertices $r \ne s$ are connected by an edge
$(rs) \in E$.  An $n \times n$ matrix
$h$ is {\em supported} (resp.~strictly supported) on $G$ if for any $r\ne s$,  $h_{rs} \ne 0 \implies r \sim s$ (resp. $h_{rs} 
\ne 0\iff r \sim s$ for $r\ne s$). 
Let $\mathcal S(G)$ (resp.~$\mathcal H(G)$) denote the vector space of real symmetric 
(resp.~Hermitian) matrices supported on $G$.  The eigenvalues of such a symmetric matrix  $h\in \mathcal S(G)$ 
are real and ordered,  $\lambda_1(h) \le \lambda_2(h) \le \cdots \le \lambda_n(h)$.
We say that $\phi\in \RR^n$ is \emph{nowhere-vanishing} if $\phi_{j}\ne0$ for all $j$. If $\phi$ is a nowhere-vanishing eigenvector of
$h$, with simple eigenvalue $\lambda_k$, then its {\em nodal (edge) count} is
\[ \nu(h,k)=\left|\{(rs)\in E\ :\ \phi_r h_{rs} \phi_s >0\}\right|.\]
(If $h_{rs}<0$, as in the case of the graph Laplacian or more generally, a discrete Schr\"odinger operator,  the nodal (edge) count is the number of edges on which $\phi$ changes sign.)  
If the graph $G$ is a tree, the nodal count is exactly $\nu(h,k)=k-1$ \cite{Fiedler}, however, this is not the case if $G$ is not a tree \cite{Band}. Consequently the {\em nodal surplus} for the $k$-th eigenvalue of a $h$ is defined to be 
\[\sigma(h,k):= \nu(h,k) - (k-1),\]
and it was proven to be non-negative and bounded by $\beta = |E|-n+1 ={\mathrm {rank}}(H_1(G))$, the first Betti number of $G$ 
\[0 \le \sigma(h,k) \le \beta.\]

A {\em signing} of $h\in \mathcal S(G)$ is a symmetric matrix $h'$ obtained from $h$ by changing the sign of some of its off-diagonal elements. \ct{When considering a random signing $h'$, we choose an element from the set of $2^{|E|}$ signings uniformly at random. In this way, $\sigma(h',k)$ is a random variable supported on $\{0,1,\ldots,\beta\}$. In this paper, for generic $h$ supported on a graph $G$ with disjoint cycles, and for each $k\in[n]$, we determine the distribution of $\sigma(h',k)$.} \quash{A fundamental problem in graph theory is to determine the distribution of nodal counts, or
nodal surplus for the eigenvalue $\lambda_k(h')$, as $h'$ varies over all signings of $h$. }

For a further introduction to these ideas, we refer the reader to \cite{AG}.

Following numerical simulations and the quantum graph analog \cite{Universality}, it was conjectured in \cite{AG} that generically, as $\beta\to \infty$ the distribution of the nodal surplus is expected to obey a universal law, converging to a Gaussian centered at $\beta/2$ with variance of order $\beta$. It was shown to hold for complete graphs with matrices that have a dominant diagonal. 
\subsection{}\label{subsec-star}
In this paper, we consider a somewhat opposite case, \emph{graphs with disjoint cycles}. A cycle is a path along the graph starting and ending at the same vertex, and it is simple if no other vertex is repeated. 
We say that $G$ has \emph{disjoint cycles} if distinct simple cycles do not share any vertex. See \S \ref{sec-disjoint} and Figure \ref{fig:disjoint}.
\begin{figure}
\includegraphics[height=2in,width=4in]{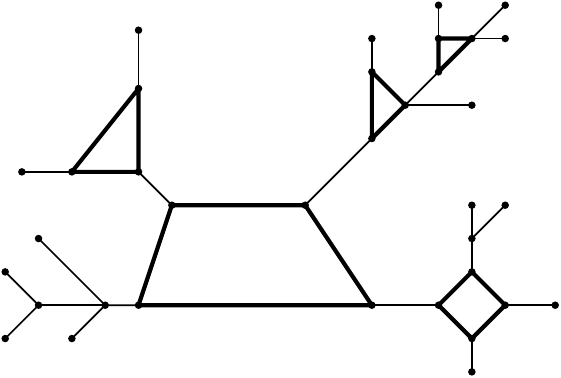}
    \caption{A graph with disjoint cycles}
    \label{fig:disjoint}
\end{figure}

If $\phi \in \RR^n$ is an eigenvector of $h \in \mathcal S(G)$
in order to avoid double subscripts we sometimes write $\phi(r)=\phi_r$.  To define the nodal count for all signing of $h \in \mathcal S(G)$, the matrix $h$ need to satisfy the following {\em generic} spectral condition:
\begin{enumerate}
\item [{[GSC]}]
We say $h \in \mathcal S(G)$ satisfies the {\em generic spectral condition}, abbreviated GSC, if $h$ is strictly supported on $G$, and every eigenvalue of every signing of $h$ is simple with nowhere vanishing eigenvector,
\end{enumerate}

In Lemma \ref{lem:genericity} we establish that condition [GSC] is indeed generic.
The main result of this paper is the following:

\begin{thm} \label{thm-intro} Let $G$ be a simple connected graph with $n$ vertices and disjoint cycles, let $h \in \mathcal S(G)$ that satisfy [GSC], and let \ct{$h'$ be a random signing of $h$}. \ct{Then for any $k\in[n]$, the random variable $\sigma(h',k)$ is binomial}\quash{the
nodal surplus distribution of $\lambda_k(h')$ is binomial, as $h'$ varies over the signings of $h$}:
the fraction of those signings $h'$ such that $\sigma(k,h ') = j$ is $2^{-\beta}\binom{\beta}{j}$.

Consequently, as $\beta\to\infty$, this distribution converges to a Gaussian centered at $\beta/2$ with variance $\beta/4$. \end{thm}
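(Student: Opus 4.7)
The proof realizes the nodal surplus as a Morse index on a parameter torus---following the Band--Berkolaiko philosophy used in \cite{Universality} and adapted to the discrete setting---and then uses the disjoint-cycle hypothesis to split the Morse index into $\beta$ independent Bernoulli$(\tfrac12)$ contributions.

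\emph{Setup and Morse interpretation.} Fix a spanning tree $T\subset G$ and let $e_i$ ($1\le i\le\beta$) be the chord of cycle $c_i$. Define a Hermitian family $h(\alpha)\in\mathcal H(G)$, $\alpha\in\mathbb T^\beta$, by $h(\alpha)_{e_i}=e^{2\pi i\alpha_i}h_{e_i}$ (and its conjugate), leaving every other entry of $h$ unchanged. Modulo the vertex gauge action (which preserves spectrum and nodal count) the signings of $h$ correspond bijectively to $\alpha'\in\{0,\tfrac12\}^\beta$. Under [GSC], each $\lambda_k(\alpha)$ is smooth near this set. Complex conjugation sends $h(\alpha)$ to $h(-\alpha)$, so the involution $\alpha\mapsto-\alpha$ preserves each $\lambda_k$ and fixes $\{0,\tfrac12\}^\beta$; every signing is therefore a critical point of $\lambda_k$. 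We use the structural identity, proved in \cite{AG} along the lines of \cite{Universality}, that the Morse index of $\lambda_k$ at a signing $\alpha'$ equals $\sigma(h',k)$. The theorem reduces to showing that, as $\alpha'$ varies uniformly over $\{0,\tfrac12\}^\beta$, this Morse index is $\mathrm{Bin}(\beta,\tfrac12)$-distributed.

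\emph{Diagonality of the Hessian from disjoint cycles.} Expand $P(\lambda,\alpha):=\det(\lambda I-h(\alpha))$ by permutations. Because the cycles of $G$ are pairwise vertex-disjoint, every length-$\ge 3$ cycle appearing in a contributing permutation is one of the $c_i$ traversed in one of two orientations; summing both orientations yields $2\cos(2\pi\alpha_i)$ times an $\alpha$-independent edge-weight product. Transpositions and fixed points contribute $\alpha$-independently. Hence
\[
P(\lambda,\alpha)=\sum_{S\subseteq\{1,\ldots,\beta\}}C_S(\lambda)\prod_{i\in S}2\cos(2\pi\alpha_i).
\]
Each mixed partial $\partial_{\alpha_i}\partial_{\alpha_j}P$ with $i\ne j$ carries a factor $\sin(2\pi\alpha_i)\sin(2\pi\alpha_j)$ and so vanishes at every signing. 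Implicit differentiation of $P(\lambda_k(\alpha),\alpha)=0$ at a critical point gives $\partial_{\alpha_i}\partial_{\alpha_j}\lambda_k=-P_{,ij}/P_{,\lambda}=0$, so the Hessian of $\lambda_k$ at every signing is diagonal.

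\emph{Bernoulli factorization.} Setting $\epsilon_i:=\cos(2\pi\alpha'_i)\in\{\pm1\}$, the diagonal entries take the form
\[
\partial_{\alpha_i}^2\lambda_k\big|_{\alpha'}\;\propto\;\frac{\epsilon_i\,A_i\bigl(\lambda_k(\alpha'),\,\epsilon_{-i}\bigr)}{P_{,\lambda}(\lambda_k(\alpha'))},\qquad A_i(\lambda,\epsilon_{-i}):=\sum_{S\ni i}C_S(\lambda)\prod_{j\in S\setminus\{i\}}2\epsilon_j,
\]
which is manifestly linear in $\epsilon_i$ with $A_i$ independent of $\epsilon_i$. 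The sign of $P_{,\lambda}(\lambda_k)$ equals $(-1)^{n-k}$ and is constant across signings (spectral simplicity along any path in $\mathbb T^\beta$, guaranteed by [GSC], preserves the ordering of eigenvalues). If we can also show that $\mathrm{sign}\,A_i(\lambda_k(\alpha'),\epsilon_{-i})$ is constant in $\epsilon_{-i}$, then $\mathrm{sign}\,\partial_{\alpha_i}^2\lambda_k=\epsilon_i s_i$ for a fixed sign $s_i$ depending only on $(h,k,i)$; the Morse index equals $\#\{i:\epsilon_i s_i=-1\}$, and uniformity of $\epsilon$ over $\{\pm1\}^\beta$ delivers $\mathrm{Bin}(\beta,\tfrac12)$. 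The Gaussian limit is the central limit theorem.

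\emph{Main obstacle.} The substantive step is the sign-invariance of $A_i$. Its $\epsilon_{-i}$-dependence is both direct, through the products $\prod\epsilon_j$, and indirect, through the moving argument $\lambda_k(\alpha')$. The disjoint-cycle hypothesis must now be used in a deeper way than a mere trigonometric expansion: since $c_i$ shares no vertex with any other cycle, $A_i(\cdot,\epsilon_{-i})$ is essentially the characteristic polynomial of $h(\alpha)$ restricted to the vertex complement of $c_i$, which decouples structurally from $c_i$. One then expects to connect paired signings by a controlled path in $\mathbb T^\beta$ and to use spectral simplicity together with the vertex-disjoint decoupling to rule out zero-crossings of $A_i$ along the path. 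I expect this step to require the bulk of the technical work.
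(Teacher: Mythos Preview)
Your setup and the diagonality argument are correct; the characteristic-polynomial expansion is a clean alternative to the paper's probability-current proof of Corollary~\ref{cor-diagonal}. The genuine gap is the ``Bernoulli factorization'': you need $\mathrm{sign}\,A_i(\lambda_k(\alpha'),\epsilon_{-i})$ to be independent of $\epsilon_{-i}$, and this is both unproven and apparently stronger than what the theorem requires. The binomial conclusion only needs the map $\epsilon\mapsto J_-(\epsilon):=\{i:\partial_i^2\Lambda_k(\epsilon)<0\}$ to be a \emph{bijection} from $\{0,\pi\}^\beta$ onto $\mathcal P[\beta]$, so that $|J_-(\epsilon)|$ runs once over every subset size. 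Monotonicity along coordinate segments gives $i\in J_-(\epsilon)\Leftrightarrow i\notin J_-(\epsilon+\pi e_i)$, but it does \emph{not} force the indicator $\mathbb 1[i\in J_-(\epsilon)]$ to depend only on $\epsilon_i$. Already for $\beta=2$ the ordering $\Lambda_k(00)<\Lambda_k(10)<\Lambda_k(11)<\Lambda_k(01)$ is consistent with monotonicity and with $J_-$ being a bijection, yet $1\in J_-(01)$ while $1\notin J_-(00)$ although $\epsilon_1=0$ in both; so your $A_1$ would change sign. (Incidentally, [GSC] does \emph{not} guarantee simple spectrum along paths in $\TT^\beta$, only at the signings themselves; your aside to that effect is incorrect, though it does not damage the specific claim about $\mathrm{sign}\,P_{,\lambda}(\lambda_k)$.)

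The paper closes the argument without independence. First, a monotonicity lemma (Lemma~\ref{lem-monotonicity}): $\Lambda_k$ is strictly monotone along each coordinate segment of the hypercube; since multiplicities can occur off the signings this requires the Rellich analytic branches and a check that all branches through a crossing move in the same direction (via the sign-definiteness of the probability current). Second, the local--global theorem of \cite{LocalTest} (Theorem~\ref{local-global}): a local extremum of $\Lambda_k$ on any coordinate subtorus $\TT_J$ is global on $\TT_J$. Together these show that $\epsilon\mapsto J_-(\epsilon)$ is injective: if $J_+(\epsilon)=J_+(\epsilon')$ with (generically) $\Lambda_k(\epsilon)<\Lambda_k(\epsilon')$, the signing $\epsilon''$ lying in $(\epsilon'+\TT_{J_+(\epsilon')})\cap(\epsilon+\TT_{J_-(\epsilon)})$ would satisfy $\Lambda_k(\epsilon'')\ge\Lambda_k(\epsilon')$ and $\Lambda_k(\epsilon'')\le\Lambda_k(\epsilon)$, a contradiction. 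Injectivity between sets of equal cardinality $2^\beta$ gives the bijection, and the binomial distribution follows. Your route via sign-constancy of $A_i$ would, if it worked, prove a strictly stronger structural statement about $J_-$; the missing idea is to replace that with the bijection argument driven by the local--global principle.
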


Our theorem was inspired by a related result (\cite[Theorem 2.3]{LiorFirstPaper}) for quantum graphs with disjoint cycles and $\QQ$-linearly independent edge lengths, where it was shown that the distribution of the nodal surplus for the first $N$ eigenvectors converges to a binomial distribution as $N\to\infty$ (A quantum graph has countably many eigenvalues). However, our case is different. We consider a fixed value of $k$ (the $k$-th eigenvalue) and the nodal count distribution over different signings of our operator (matrix). For example, $\beta$ may be much greater than the term $k-1$ in the nodal count.  (For
quantum graphs, on the other hand, $k$ grows to infinity while $\beta$ is fixed, so the nodal surplus is a small perturbation of the linearly growing nodal count.)

\subsection{}
Given a graph $G$ with a matrix $h$ as above, the various signings of $h$ lie in a single 
torus\footnote{defined (in \cite{NodalCount}) by allowing  off-diagonal elements $h_{rs}$  to vary by a phase $e^{i\theta_{rs}}h_{rs}$}
$\TT_h\subset \mathcal H(G)$, 
 see equation (\ref{eqn-TTh}).  We may consider the eigenvalue $\lambda_k$ to be
a sort of Morse function on $\TT_h$.  It is a theorem of Berkolaiko \cite{NodalCount}, further explained by Colin de
Verdi\`ere \cite{CdV} that each signing $h' \in \TT_h$ is a critical point of $\lambda_k$, whose Morse index
coincides with the nodal surplus for $h'$.  Unfortunately due to
the existence of a group of {\em gauge transformations} that acts on $\TT_h$ and preserves $\lambda_k$, 
each critical point $h'$ is highly degenerate.  

The degeneracy in the critical points can be removed by dividing the torus $\TT_h$ by the gauge group.  
The result is a torus $\mathcal M_h$ whose dimension 
\[\beta = |E|-n+1 ={\mathrm {rank}}(H_1(G))\]
is the first Betti number of $G$.
The genericity condition [GSC] now implies (\cite[Thm 3.2]{AG}) that for each signing $h'$ of $h$ the corresponding point
$[h'] \in \mathcal M_h$ is a nondegenerate critical point of $\lambda_k:\mathcal M_h \to \RR$.
One might hope, especially in the case of a graph with disjoint cycles, that these are the only 
critical points of $\lambda_{k}$. If this was the case then we would
conclude that $\lambda_k$ is a {\em perfect} Morse function, that each critical point contributes to the homology
of $\mathcal M_h$ in a single degree and hence the nodal surplus is binomially distributed.  A similar situation occurs in
 \cite[Theorem 3.2 and \S{3.4} ]{AG}, where it was proven that the nodal surplus distribution is binomial 
when $G$ is a complete graph and $h$ has a dominant diagonal. 
It is likely true, for generic graphs with disjoint cycles, that each $\lambda_k$ is a perfect Morse function on $\TT_h$, but we do not prove it.

\subsection{}   Instead, we develop a
 different approach  using the combinatorics of the Boolean lattice (\S \ref{subsec-outline}) 
and two technical steps: (a) the 
{\em monotonicity lemma} (Lemma \ref{lem-monotonicity}), and (b) the {\em local-global} theorem 
(Theorem \ref{local-global}).  
These results allow  us to focus on the one dimensional trajectories that connect neighboring 
signings $h'$ and ${h^{\prime\prime}}$ as described in Propositions \ref{prop-part1} and
\ref{prop-part2}.
The proof is then outlined in \S \ref{subsec-outline}.

\subsection{} An important ingredient in the proof is the {\em probability current} $\JJ(h, \phi)$
(Definition \ref{def-criticality}), a real anti-symmetric matrix supported on $G$, which may be interpreted as a gauge invariant
divergence-free vector field
or as a harmonic 1-form.  It is defined for any $h\in\mathcal{H}(G)$ and every eigenvector of $h$ and has a special structure. 
It vanishes on every bridge and is constant on the edges of each
simple separated cycle.  If the eigenvalue $\lambda$ is simple 
and the eigenvector is normalized then $-2\JJ$ is the derivative 
of $\lambda$, cf. Proposition \ref{prop-current}.

\subsection{Acknowledgements}  The authors are grateful to R. Band and G. Berkolaiko
for their encouragement.  Part of the proof
follows ideas developed by the first author together with  Band and
 Berkolaiko in a joint unpublished project studying a similar
question on quantum graphs.  We are particularly grateful to G. Berkolaiko for his
meticulous reading of an earlier version of this manuscript and for his many useful
comments and suggestions.  We would also like to thank Cynthia Vinzant for useful conversations. The first author is supported by the Simons Foundation Grant 601948, DJ.

\section{Recollections on graphs}
\subsection{}  As in \S \ref{sec-intro} we consider
a simple connected graph $G$ on $n$ ordered vertices numbered $1,2, \cdots, n$.  
 We write $\mathcal H_n, \mathcal S_n, \mathcal A_n$ for
the Hermitian, real symmetric, and real antisymmetric $n \times n$ matrices, and
we write $\mathcal H(G)$, $\mathcal S(G)$, $\mathcal A(G)$ for those  matrices supported on $G$.  
If $(rs)$ is an edge in $G$ write
$E[rs]$ for the matrix that is zero except for $E[rs]_{rs}=1$ and let
$A[rs] = E[rs]-E[sr]$ be the corresponding antisymmetric matrix.


\subsection{}\label{subsec-recollections}
Each edge $(rs)$ of $G$ has a natural orientation ($+$ or $-$) which is the sign of $s-r$.
The space $C_1(G;\RR)$ of 1-chains consists of finite linear combinations of oriented edges
\[\gamma =
\sum_{\substack{{r \sim s}\\{r<s}}}\gamma_{rs}(rs) \ \text{ with }\ \gamma_{rs} \in \RR.\]
 The space $C_0(G,\RR)$ of 0-chains consists of formal finite linear
combinations $\sum_{r=1}^n a_r (r)$  of vertices with $\partial:C_1(G) \to C_0(G)$ defined by $\partial(rs) = (s) - (r)$.
We may consider $\mathcal A(G)$ to be the space of 1-forms $\Omega^1(G;\RR)$, dual
to $C_1(G)$ with respect to the bilinear pairing
\[  \int_{\gamma}\alpha := \sum_{\substack{{r \sim s}\\{r<s}}} \gamma_{rs}\alpha_{rs}\] 
where $\gamma \in C_1(G;\RR)$ and $\alpha \in \mathcal A(G)$.

The space of real valued functions defined on the vertices of $G$ is denoted $\Omega^0(G,\RR)\cong \RR^n$.
The differential $d:\Omega^0(G) \to \Omega^1(G)$ is 
\[(d\theta)_{rs} = \begin{cases} \theta(s) - \theta(r) &\text{if } r \sim s \\
0 &\text{otherwise}\end{cases}\]
Its adjoint with respect to the natural inner products\footnote{given by
$\langle \theta,\theta'\rangle = \sum_r \theta_r\theta'_r$ and
$\langle \alpha, \alpha' \rangle = \sum_{r<s}\alpha_{rs}\alpha'_{rs}$ for
$\theta \in \Omega^0$ and $\alpha \in \Omega^1$}
on $\Omega^0(G)$ and $\Omega^1(G)$ is
\[ (d^*\alpha)_r = \sum_s \alpha_{rs}.\]
\subsection{}
Stokes' theorem $\int_{\gamma} d\theta = \int_{\partial \gamma} \theta$ implies that the integration pairing passes to a 
nonsingular dual pairing between the cohomology 
$ H^1(G,\RR) = \ker(d)$ and  the homology $H_1(G,\RR) = \text{coker} (\partial)$.
Consequently, {\em given $\alpha \in \mathcal A(G)$, there exists $\theta \in \Omega^0(G,\RR)$ such that
$\alpha = d\theta$ if and only if $\int_{\gamma}\alpha = 0$ for every cycle $\gamma$.} \qed

\subsection{Action of $\mathcal A(G)$}
The space $\mathcal A_n(\RR)$  of real $n \times n$ antisymmetric matrices acts on the space $\mathcal H_n$  by
\[ (\alpha *h)_{rs} = e^{i \alpha_{rs}h_{rs}}\]
with $\alpha'*\alpha*h = (\alpha'+\alpha)*h$. Let $\mathcal A_n(2\pi\ZZ)$ be the set
of antisymmetric matrices whose entries are integer multiples of $2\pi$.
The action factors through the torus $\mathcal A_n(\RR) / \mathcal A_n(2\pi\ZZ)$ so that
\[ \TT(G) = \left\{ \alpha \in \mathcal A_n(\RR)/\mathcal A_n(2\pi\ZZ):
\alpha_{rs} \ne0 \implies r \sim s \right\}\]
acts on $\mathcal H(G)$.  The mapping 
\[ *: \TT(G) \times \mathcal S(G) \to \mathcal H(G)\]
is a finite surjective covering.  For each $h \in \mathcal S(G)$  the orbit 
\begin{equation}\label{eqn-TTh}\TT_h = \TT(G)*h\end{equation}
is a torus of perturbations\footnote{referred to in \cite{NodalCount}
as the torus of ``magnetic perturbations of $h$" because, for the Schr\"odinger
operator, these perturbations arise from the introduction of a magnetic field, cf,~\cite{AG}.} of $h$.  
The torus $\TT_h$ is preserved under complex conjugation and the fixed points are the
intersection $\TT_h \cap \mathcal S(G)$, which consists of the  {\em signings} of $h$. 

\subsection{Gauge equivalence} \label{subsec-gauge}
If $\theta = (\theta_1,\theta_2,\cdots,\theta_n) \in 
\Omega^0(G,\RR)\cong \RR^n$
and $h \in \mathcal H(G)$ then
\[ d\theta *h = e^{i\theta}h e^{-i\theta}\]
is conjugate to $h$, where $e^{i\theta} = \text{diag}(e^{i\theta_1}, e^{i\theta_2}, \cdots, e^{i\theta_n})$.
Therefore $\lambda_k(d\theta *h) = \lambda_k(h)$.  If $V_{\lambda}(h) = \ker(h-\lambda I)$
then  
\begin{equation}\label{eqn-phase} 
V_{\lambda}(d\theta *h) = e^{i\theta}V_{\lambda}(h).\end{equation}
We say the elements $h$ and $h' = d\theta *h$  are {\em gauge equivalent} and differ by the {\em gauge transformation} $d\theta$.  Geometrically, equation (\ref{eqn-phase}) says that
eigenvectors $\phi, \phi'$ of $h$ and $h'$ differ by  changing the phases, $\phi'_r = e^{i \theta_r}\phi_r$.  Since their eigenvalues $\lambda_k, \lambda'_k$ are equal it makes sense to restrict attention to gauge-equivalence classes of matrices.

We may formally define the gauge group $\mathcal G = (\RR/2\pi\ZZ)^n$ with action 
 $\theta \diamond h = d\theta *h$,
whose orbits are gauge equivalence classes.  
The quotient of $\TT_h$ under gauge equivalence is an abstract torus 
$\mathcal M_h= \TT_h\sslash \mathcal G$, of
dimension $\beta$, the {\em manifold of magnetic perturbations modulo
gauge transformations}.  We sometimes write $[h] \in \mathcal M_h$ for the gauge-equivalence
class of $h$.  

Equation (\ref{eqn-phase}) reflects an action  of the
gauge group on vectors $\phi \in \CC^n$ with $\theta \diamond \phi = e^{i\theta}\phi$.

\section{Disjoint cycles}\label{sec-disjoint}
\subsection{}\label{subsec-separated1}  We say a graph $G$ has disjoint cycles if distinct simple cycles
do not share a vertex, cf. \S \ref{subsec-star}. Thus, each edge in $G$ is a bridge unless it is 
contained in a simple cycle.   Throughout this section we fix a graph $G$ with disjoint cycles 
and a matrix $h\in \mathcal S(G) $.  
We also fix $k\in [n]=\{1,2,\cdots,n\}$  and consider the eigenvalue function $\lambda_k$. 
\subsection{The function $\Lambda_{k}$ and choice of basis for $\TT^{\beta}$}\label{spanning_tree} Fix a spanning tree in $G$.  Its complement consists of a single 
edge in each simple cycle.
The elements $\alpha \in \TT(G)$
that are supported on these edges form a torus $\TT^{\beta}$ that projects isomorphically
to the quotient torus $\mathcal M_h$.  In other words, every element $\alpha *h \in \TT_h$ is
gauge equivalent to some $\alpha'*h$ where $\alpha'$ is supported on these chosen edges.
Thus, $\TT^{\beta}$ is a ``lift" to $\TT(G)$ of the manifold $\mathcal M_h$, as in the
following diagram.  The composition across the top row is denoted $\Lambda_k:\TT^{\beta} \to \RR$.

\begin{equation}\label{eqn-figT(G)}
	\begin{diagram}[size=2.5em]
		\TT^{\beta} & \rInto &\TT(G) & \rTo^{*h} & \TT_h & \rInto & \mathcal H_n & \rTo^{\lambda_k} & \RR&\\
	&\luTo(4,2)_{\cong}&	 && \dTo & && \ruTo(4,2)_{\lambda_k} &&\\
	&& & & \mathcal M_h &&&&&&
	\end{diagram}\end{equation}

\newcommand{\alphab}{\epsilon}

\subsection{Combinatorics of $\TT^{\beta}$}\label{subsec-combinatorics}  Choose an ordering of the
edges identified in \S \ref{spanning_tree} (with one edge in each simple cycle).  This
gives a particular choice of identification
\begin{equation}\label{eqn-torus-decomposition}
\begin{diagram}[size=2em]
\left(S^1\right)^{\beta} &\cong& \TT^{\beta} &\rTo^{\scriptstyle *h} & \mathcal M_h 
\end{diagram} 
\end{equation}  
Let $e_1,e_2,\cdots, e_{\beta} \in \TT^{\beta}$
denote the image in $\RR^{\beta}/(2\pi\ZZ)^{\beta}$ of the standard 
basis\footnote{each $e_j=A[r_js_j]=E[r_js_j]-E[s_jr_j]$ 
is in fact a matrix in $\mathcal A (G)$ defined modulo $2\pi$, and corresponds to one of the
particular edges identified in \S \ref{spanning_tree}. } vectors. 
Points  $\alphab=\sum_{i=1}^{\beta}\epsilon_i e_i \in \TT^{\beta}$ with coordinates  $\epsilon_i \in \left\{ 0, \pi \right\}$ are called symmetry points.  
By abuse of notation we write $\alphab \in \left\{0,\pi \right\}^{\beta}$.  The corresponding matrices $h_{\alphab}=\alphab*h$ are the signings of $h$ modulo gauge equivalence.  

\newcommand{\sym}{ \left\{ 0, \pi \right\}^{\beta}}
There are $2^{\beta}$ symmetry points in $\TT^{\beta}$.  
They form the vertices of a (hyper-)cube 
\[ \square \subset \mathcal M_h\]
whose $1$-skeleton consists of
edges that connect a symmetry point $\alphab$ to a neighbor $\alphab + \pi e_j \mod{2\pi}$ (where $j\in[\beta]$).  A choice of eigenvalue $\lambda_k$ determines a partial ordering on the symmetry points,
\[\alphab \succeq \alphab'\ \iff\ \lambda_k(\alphab *h) \ge \lambda_k(\alphab'*h).\]
  For $\alphab \in \sym \subset \TT^{\beta}$ let 
\begin{equation}\label{eqn-J}J_{-}(\alphab) =J_{-}(\epsilon, k, h) = \left\{ j \in [\beta]:\ \lambda_k((\alphab +\pi e_j)*h) < \lambda_k\left(\alphab*h\right)\right\}.
\end{equation}
The set $J_{-}(\alphab)$ identifies those neighbors $\alphab + \pi e_j$ of $\alphab$  in the 1-skeleton for which the 
eigenvalue $\lambda_k(h_{\alphab})$ decreases. 

\quash{
\begin{thm}\label{thm-main} 
Let $G$ be a simple, connected graph with disjoint cycles, and let $h$ be a real symmetric matrix supported on $G$ 
that satisfies [GSC].  Fix $k$ $(1 \le k \le n)$.  Then,
as $h'$ varies over the signings of $h$ the nodal surplus distribution of $\lambda_k(h')$ is binomial, that is,
the fraction of those signings $h'$ such that $\sigma(k,h') = j$ is $2^{-\beta}\binom{\beta}{j}$.
\end{thm}

\subsection{Outline of proof}  \label{subsec-outline}
The proof has many technical steps but the ideas are relatively simple so we have organized the remainder of the
paper in a slightly unorthodox way, first stating the main steps in the proof and later providing their details.

}

\subsection{}
Although the proof of our main result (Theorem \ref{thm-intro}) has many technical steps the ideas are
relatively simple, requiring only the following two propositions whose proofs appear in
\S \ref{sec-proofs}.
Let $G$ be a simple connected graph with disjoint cycles and suppose $h \in \mathcal S(G)$ 
is generic in the sense of [GSC]. 
Fix $k\in [n]$  and recall the notation $\Lambda_k(\alpha) =
\lambda_k(\alpha*h)$ for $\alpha \in \TT^{\beta}$.   

\begin{prop}\label{prop-part1}
  Each $\alphab \in \sym$ is a nondegenerate critical point of the function 
$\Lambda_k:\TT^{\beta} \to \RR$ and its Morse index is $\text{ind}(\Lambda_k)(\alphab) = |J_{-}(\alphab)|$.
 The Hessian of the function $\Lambda_k$ is diagonal with respect to the
decomposition (\ref{eqn-torus-decomposition}).
\end{prop}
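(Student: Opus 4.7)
The plan is to verify in sequence three assertions: (i) $\alphab$ is a critical point of $\Lambda_k$, (ii) the Hessian is diagonal with respect to the decomposition (\ref{eqn-torus-decomposition}), and (iii) it is nondegenerate with Morse index equal to $|J_{-}(\alphab)|$.

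For criticality, observe that at a symmetry point $\alphab \in \{0,\pi\}^{\beta}$ the matrix $\alphab * h$ is gauge-equivalent to a real signing of $h$, so the $k$-th eigenvector $\phi$ may be chosen real. The probability current, whose entries have the form $\JJ_{rs} = \mathrm{Im}(\bar\phi_r h_{rs}\phi_s)$, then vanishes identically on $G$. Invoking Proposition \ref{prop-current}, the derivative of $\Lambda_k$ along the basis vector $e_j$ is (up to a factor of $-2$) the value of $\JJ$ on the distinguished edge of the $j$-th simple cycle, and hence $\nabla \Lambda_k(\alphab) = 0$.

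For diagonality, from $\partial_j \Lambda_k = -2\JJ_{r_j s_j}$ one obtains $\partial_i\partial_j \Lambda_k(\alphab) = -2\,\partial_i \JJ_{r_j s_j}|_{\alphab}$. For $i \ne j$ the chosen edges lie on two vertex-disjoint simple cycles $C_i$ and $C_j$. By the structural description in \S 1.5, the current $\JJ$ is always divergence-free, vanishes on every bridge, and is constant on every simple separated cycle; hence its first-order perturbation $\partial_i \JJ$ satisfies the same structural constraints. Since the only source of this perturbation is the phase twist on the edge $(r_i s_i) \in C_i$, the induced first-order current should be carried by $C_i$ alone, and vertex-disjointness of $C_i$ from $C_j$ prevents any cross-coupling through a shared junction. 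Thus $\partial_i \JJ_{r_j s_j}|_{\alphab} = 0$ for $i \ne j$.

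For nondegeneracy and the Morse index, restrict $\Lambda_k$ to each coordinate circle $S^1_i \subset \TT^{\beta}$ obtained by freezing all $\epsilon_\ell$ with $\ell \ne i$ at their symmetry values. This restriction is smooth and $2\pi$-periodic, with critical set containing $\epsilon_i \in \{0,\pi\}$. Under [GSC] and the disjoint-cycles structure one argues these are the only two critical points on $S^1_i$ and that they are nondegenerate extrema---for instance by showing that the zero set of $\JJ_{r_i s_i}$ along $S^1_i$ consists of exactly two simple zeros. On a circle with only two critical points, one is a maximum and the other a minimum, so $\partial_i^2 \Lambda_k(\alphab) < 0$ precisely when $\Lambda_k(\alphab + \pi e_i) < \Lambda_k(\alphab)$, i.e.\ precisely when $i \in J_{-}(\alphab)$. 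Counting negative diagonal entries yields $\mathrm{ind}(\Lambda_k)(\alphab) = |J_{-}(\alphab)|$.

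The principal obstacle is the diagonality claim: one must prove that a phase perturbation on the distinguished edge of one cycle induces a first-order current response supported only on that cycle. I expect this to be where the monotonicity lemma and the local-global theorem referenced in \S 1.4 enter, supplying the reduction of the global eigenvalue problem to a cycle-by-cycle analysis that the claim requires; the nondegeneracy analysis on each $S^1_i$ should then be a routine consequence.
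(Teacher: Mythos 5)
Your overall architecture resembles the paper's (criticality via vanishing of the probability current at real points, a diagonal Hessian, a per-coordinate-circle analysis for the index), but two of the three steps have genuine gaps. The diagonality step is not actually proved. The structural facts you invoke --- that $\partial_i\JJ$ is divergence-free, vanishes on bridges, and is constant on each separated cycle --- do not force $\partial_i\JJ$ to vanish on $C_j$: a current with exactly those properties can still be a nonzero constant circulation around $C_j$. The mechanism the paper uses is different and essential: along the family $\alpha(t)=\epsilon+tA[r_is_i]$ the matrix stays \emph{real} on the block $G_B$ across a bridge containing $C_j$, and the partial-criticality Lemma \ref{lem-conjugation} --- a conjugation argument requiring simplicity of the eigenvalue --- shows the eigenvector can be chosen real on $G_B$, so $\JJ$ vanishes identically on $G_B$ for all $t$; differentiating the resulting identity $\partial\Lambda_k/\partial\alpha_{r_js_j}(\alpha(t))\equiv 0$ in $t$ gives the vanishing mixed partial (Corollary \ref{cor-diagonal}). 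Your suggestion that the monotonicity lemma or the local--global theorem would supply this is misplaced and would be circular: the proof of monotonicity (via Lemma \ref{lem-Jnot0}) itself relies on the diagonality of the Hessian.

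Second, nondegeneracy. The paper does not derive it from the circle analysis; it imports it from \cite{NodalCount, CdV, AG} under [GSC], and it is an \emph{input} to that analysis: Lemma \ref{lem-Jnot0} uses nondegeneracy to rule out a flat band (via Lemma \ref{lem-cycle criticality}) and thereby conclude $\JJ_{r_is_i}\ne 0$ for $t\in(0,\pi)$, which is what yields strict monotonicity and exactly two critical points on each $S^1_i$. Your plan runs this implication backwards (``two simple zeros, hence nondegenerate''), and the claim about the zero set of $\JJ_{r_is_i}$ is asserted rather than proved --- in the paper it is the content of the flat-band criterion, which requires a nontrivial construction of eigenvector families. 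Once diagonality and nondegeneracy are in hand, your final step (the sign of $\partial_i^2\Lambda_k(\epsilon)$ is negative exactly when the antipodal symmetry point is lower) is correct and is exactly how the paper counts $|J_-(\epsilon)|$.
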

\begin{prop}\label{prop-part2} The mapping $\left\{0,\pi \right\}^{\beta} \to \mathcal P{[\beta]}$ 
(the set of subsets of $[\beta]$), given by
$\alphab \mapsto J_-(\alphab)$ is bijective.  This implies that $\sym$ becomes a Boolean
lattice\footnote{The Boolean lattice on a finite set $S$ is the partially ordered set $\mathcal P(S)$
of subsets of $S$ ordered by inclusion.} under the above partial order. 
\end{prop}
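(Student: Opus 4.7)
The plan is to combine Proposition~\ref{prop-part1} (diagonal Hessian, Morse index $|J_{-}(\alphab)|$) with the Monotonicity Lemma and the local-global theorem in order to pin down $J_{-}(\alphab)$ coordinate by coordinate. The first step is to detect $J_{-}(\alphab)$ from the Hessian: the Monotonicity Lemma, applied to the $1$-parameter trajectory connecting the neighboring signings $\alphab$ and $\alphab+\pi e_j$, should force $\Lambda_k$ restricted to the coordinate circle $\{\alphab + t e_j : t \in \RR/2\pi\ZZ\}$ to have only the two symmetry points as critical points, one a local max and one a local min. Combined with the diagonality of the Hessian from Proposition~\ref{prop-part1}, this yields
\[
    J_{-}(\alphab) \;=\; \Bigl\{\, j\in [\beta] \,:\, \tfrac{\partial^{2}\Lambda_{k}}{\partial \alpha_{j}^{2}}(\alphab) < 0 \,\Bigr\}.
\]

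The main step, and the expected obstacle, is to prove that the sign of the $j$th diagonal Hessian entry at $\alphab$ depends only on the single coordinate $\alpha_j$, not on the other coordinates of $\alphab$. This is where the disjoint-cycles hypothesis must do real work, presumably via the local-global theorem. Because distinct simple cycles of $G$ share no vertex, the probability current $\JJ(h_{\alphab},\phi)$ is constant on each simple cycle and vanishes on every bridge; together with $-2\JJ$ being the derivative of $\lambda_k$, one expects $\partial^{2}\Lambda_k/\partial\alpha_j^{2}(\alphab)$ to be computable from data localized to the $j$th simple cycle, whose sign is determined by $\alpha_j$ alone.

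Once this coordinate independence is established, bijectivity follows combinatorially. The definition immediately gives $j\in J_{-}(\alphab) \iff j\notin J_{-}(\alphab+\pi e_j)$, so coordinate independence yields constants $c_1,\ldots,c_\beta\in\{0,\pi\}$ with $j\in J_{-}(\alphab) \iff \alpha_j = c_j$. The map $S\mapsto \alphab_S$, defined by $(\alphab_S)_j = c_j$ for $j\in S$ and $(\alphab_S)_j = \pi - c_j$ for $j\notin S$, is then a two-sided inverse to $\alphab\mapsto J_{-}(\alphab)$, producing the bijection and the induced Boolean lattice structure on $\sym$.
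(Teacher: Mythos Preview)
Your step 1 is fine and matches Proposition~\ref{prop-part1}: monotonicity plus the diagonal Hessian indeed gives $J_{-}(\alphab)=\{j:\partial^{2}\Lambda_{k}/\partial\alpha_{j}^{2}(\alphab)<0\}$.

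The gap is step 2. You assert that the sign of $\partial^{2}\Lambda_{k}/\partial\alpha_{j}^{2}(\alphab)$ depends only on $\alphab_{j}$, but you do not prove it, and the heuristic via the probability current does not establish it. The current $\JJ$ governs the \emph{first} derivative of $\Lambda_{k}$; the diagonal Hessian entry, by equation~\eqref{ddot-mu}, is $\langle\phi,\ddot h\phi\rangle+2\Re\langle\phi,\dot h\dot\phi\rangle$, and both $\phi$ and $\dot\phi$ depend on the entire matrix $h_{\alphab}$, hence on all coordinates of $\alphab$. Nothing in the disjoint-cycles structure forces this global dependence to cancel out in the \emph{sign}. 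Note also that your claim is strictly stronger than bijectivity: there exist bijections $\{0,\pi\}^{\beta}\to\mathcal P[\beta]$ satisfying the swap condition $j\in J_{-}(\alphab)\Leftrightarrow j\notin J_{-}(\alphab+\pi e_{j})$ that are \emph{not} of product form (already for $\beta=2$), so you are trying to prove more than the proposition states, with no clear mechanism.

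The paper's argument avoids this altogether. It first reduces, by a density/continuity argument, to $h$ satisfying the additional generic condition that distinct gauge classes of signings have distinct $\lambda_{k}$. It then uses the local-global theorem not coordinate-by-coordinate but on the full sub-tori $\TT_{\pm}(\alphab)$: $\Lambda_{k}$ attains its max over $\alphab+\TT_{-}(\alphab)$ at $\alphab$, and its min over $\alphab+\TT_{+}(\alphab)$ at $\alphab$ (Corollary~\ref{lem-Tplus}). Injectivity then follows by a short contradiction: if $J_{-}(\alphab)=J_{-}(\alphab')$ with $\Lambda_{k}(\alphab)<\Lambda_{k}(\alphab')$, choose the signing $\alphab''$ that agrees with $\alphab$ on $J_{+}$ and with $\alphab'$ on $J_{-}$; then $\alphab''\in(\alphab'+\TT_{+}(\alphab'))\cap(\alphab+\TT_{-}(\alphab))$, forcing $\Lambda_{k}(\alphab'')\ge\Lambda_{k}(\alphab')>\Lambda_{k}(\alphab)\ge\Lambda_{k}(\alphab'')$. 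This uses the local-global theorem in an essential multidimensional way, whereas your sketch only invokes one-dimensional monotonicity and never actually applies local-global.
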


\subsection{Proof of Theorem \ref{thm-intro}}\label{subsec-outline}
First we consider the nodal distribution of $\Lambda_k(\epsilon)$ as $\epsilon$ varies in 
$\{0, \pi\}^{\beta} \subset \TT^{\beta}$.
By \cite{NodalCount, CdV, AG}  the function $\Lambda_k$ has a nondegenerate critical point at each
$\alphab \in \sym$ and its Morse index equals the nodal surplus $\sigma(h,k)$  at that point.  By Proposition \ref{prop-part1}, this means that the nodal surplus distribution
coincides with the distribution of the numbers $|J_-(\alphab)|$.  Proposition \ref{prop-part2} implies that the distribution of the numbers $|J_-(\alphab)|$, and hence also the nodal surplus distribution for $\lambda_k$, is
binomial as $\alphab$ varies in $\sym$. 

Next we consider the set of signings of $h$.  The set $\sym*h$ is the quotient of the set of signings of 
$h$ by the action of the 
gauge group, or more accurately, the action by a certain subgroup of the gauge group.  
If $\theta = (\epsilon_1, \epsilon_2, \cdots, \epsilon_n) \in \Omega^0(G;\RR)$ 
with $\epsilon_i \in \{0,\pi\}$  and if $h' \in \mathcal A(G)$ is a signing of 
$h$ then $d\theta *h'$ is another signing.  The set of such $\theta$ form a group
under addition modulo $2\pi$.
If $h$ is properly supported on $G$ then this defines a free action of 
$(\ZZ/(2))^n$ on the set of signings (cf. \cite[\S 2.6, \S 2.7]{AG}).  
Each symmetry point $\alphab\in \sym \subset \TT^{\beta}$
corresponds to exactly the same number, $2^{n-\beta}$ of signings.
Therefore the binomial distribution on $\sym$ becomes the same binomial distribution on the set of signings. \qed

\section{Probability current and criticality}\label{sec-criticality}
Throughout this section we fix a simple connected graph $G$ with $n$ vertices and $h\in\mathcal{S(G)}$ strictly supported on $G$. 
\begin{defn} \label{def-criticality} Let $\alpha\in\mathcal{A}(G)$ and set $h_{\alpha} = \alpha*h$.
Given an eigenvector $\phi$ of $h_{\alpha}$,  define the {\em \textbf{probability current}} $\Cu=\Cu(h_{\alpha},\phi) \in \mathcal{A}(G)=\Omega^1(G,\RR)$ by
\[\Cu_{rs} = \Im\left((h_{\alpha})_{rs}\bar\phi_r \phi_s\right)=\Im\left(e^{i\alpha_{rs}}h_{rs}\bar\phi_r \phi_s\right).\]
We say that the eigenvector $\phi$ satisfies the
{\em \textbf{criticality condition}} at an edge $(rs)$ if $\Cu_{rs}=0$.\end{defn}
We remark that the probability current is defined for any eigenvector whether or not the eigenvalue is simple. 
\begin{prop}\label{prop-current}  
The probability current $\Cu=\Cu(h_{\alpha},\phi)$ satifies the following:\begin{enumerate}
    \item $\Cu$ is gauge-invariant, namely  $\Cu(d\theta*h_{\alpha},e^{i\theta}\phi)=\Cu(h_{\alpha},\phi)$. 
    \item $\Cu$ is divergence free, meaning that $d^* \Cu = 0$.
    \item \label{item-bridge}
    $\Cu_{rs}=0$ for every bridge\footnote{A bridge is an edge whose removal separates the graph $G$ into
    two pieces, each of which is a union of connected components} $(rs)$.
    \item $\Cu$ is constant along the edges of any simple cycle of $G$ that is disjoint from all others.
 
    \item \label{item-derivative} If $\lambda(h_{\alpha})$, the eigenvalue of $\phi$, is simple, then
    $\Cu$ is proportional to its derivative,
    \[ \frac{\partial \lambda(h_{\alpha})}{\partial \alpha_{rs}}= 
    \frac{\partial \Lambda}{\partial \alpha_{rs}} = -2\|\phi\|^{2}\JJ_{rs}.\]
\end{enumerate}
\end{prop}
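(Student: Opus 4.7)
The plan is to handle the five items in the order they are stated; the first four follow from the eigenvalue equation $h_\alpha\phi=\lambda\phi$ together with the antisymmetry of $\JJ$ and the combinatorics of $G$, while (5) is a standard Hellmann--Feynman computation. The calculations are all short; the main bookkeeping issue is keeping track of the antisymmetry constraint $\alpha_{sr}=-\alpha_{rs}$ and the Hermitian constraint $(h_\alpha)_{sr}=\overline{(h_\alpha)_{rs}}$.

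For (1), I would check gauge invariance by direct substitution. Under $(h_\alpha,\phi)\mapsto(d\theta*h_\alpha,\,e^{i\theta}\phi)$ the edge factor becomes
\[
(d\theta*h_\alpha)_{rs}\,\overline{(e^{i\theta}\phi)_r}\,(e^{i\theta}\phi)_s = e^{i(\theta_r-\theta_s)}(h_\alpha)_{rs}\cdot e^{-i\theta_r}\bar\phi_r\cdot e^{i\theta_s}\phi_s = (h_\alpha)_{rs}\bar\phi_r\phi_s,
\]
so its imaginary part is unchanged. For (2), I would multiply the $r$-th component of $h_\alpha\phi=\lambda\phi$ by $\bar\phi_r$, obtaining $\sum_s(h_\alpha)_{rs}\bar\phi_r\phi_s=\lambda|\phi_r|^2\in\RR$; taking imaginary parts yields $(d^*\JJ)_r=\sum_s\JJ_{rs}=0$. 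This argument uses only the eigenvalue equation, so it works whether or not $\lambda$ is simple, matching the remark immediately after the proposition.

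The proofs of (3) and (4) are combinatorial consequences of (2) together with antisymmetry. For a bridge $(rs)$, let $V_1$ be the component of $G\setminus\{(rs)\}$ containing $r$ and sum $(d^*\JJ)_v=0$ over $v\in V_1$: all internal edges of $V_1$ cancel in pairs by $\JJ_{vw}=-\JJ_{wv}$, and the only remaining contribution is $\JJ_{rs}$ since $(rs)$ is the unique edge leaving $V_1$, hence $\JJ_{rs}=0$. For a simple cycle $C=(r_1,\ldots,r_m,r_1)$ disjoint from every other simple cycle, any edge incident to some $r_i$ but not lying on $C$ must be a bridge (otherwise a different simple cycle would also pass through $r_i$), so by (3) it carries no current. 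The conservation law at $r_i$ therefore reduces to $\JJ_{r_{i-1}r_i}+\JJ_{r_ir_{i+1}}=0$, i.e.\ $\JJ_{r_{i-1}r_i}=\JJ_{r_ir_{i+1}}$, proving constancy along $C$.

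Finally, for (5), I would invoke Hellmann--Feynman. Since $\alpha_{sr}=-\alpha_{rs}$, treating $\alpha_{rs}$ (with $r<s$) as the independent coordinate gives $\partial_{\alpha_{rs}}(h_\alpha)_{rs}=i(h_\alpha)_{rs}$ and $\partial_{\alpha_{rs}}(h_\alpha)_{sr}=-i\overline{(h_\alpha)_{rs}}$, with every other entry unchanged. Assembling $\langle\phi,(\partial_{\alpha_{rs}}h_\alpha)\phi\rangle$ from these two modifications and using $z-\bar z=2i\,\Im(z)$ produces exactly $-2\JJ_{rs}$, so Hellmann--Feynman for a simple eigenvalue yields the stated proportionality between $\partial\Lambda/\partial\alpha_{rs}$ and $\JJ_{rs}$. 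The only delicate point throughout is the sign/normalization convention on the factor of $\|\phi\|^2$; every other step is a one-line verification.
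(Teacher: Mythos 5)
Your proposal is correct and follows essentially the same route as the paper: items (1), (2) and (5) are verbatim the paper's computations (the gauge phases cancel, the eigenvalue equation gives $\sum_s(h_\alpha)_{rs}\bar\phi_r\phi_s=\lambda|\phi_r|^2\in\RR$, and the Hellmann--Feynman/Kato first-order formula gives (5)). For (3)--(4) the paper pairs $\JJ$ against $d\theta$ where $\theta$ is the indicator of one side of the cut and uses $\langle d\theta,\JJ\rangle=\langle\theta,d^*\JJ\rangle=0$; your vertex-by-vertex summation of the divergence over one component is the same computation unwound, and your derivation of (4) from (3) together with conservation at each cycle vertex is an equivalent variant (the paper instead cuts two cycle edges at once) --- the only slip is the displayed conservation law at $r_i$, which should read $\JJ_{r_ir_{i-1}}+\JJ_{r_ir_{i+1}}=0$ and then yields your stated conclusion $\JJ_{r_{i-1}r_i}=\JJ_{r_ir_{i+1}}$ by antisymmetry.
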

We remark, in particular, if the criticality condition holds on an edge of a disjoint cycle then it holds on all the edges of that cycle. The proof of Proposition \ref{prop-current} will appear after 
a short review (\S \ref{subsec-derivative}) on derivatives of eigenvalues, which is used in the proof. 
\subsection{Derivatives of eigenvalues}  \label{subsec-derivative}
 Recall that $A{[rs]}$ is the antisymmetric matrix
with zero entries except for $A{[rs]}_{rs}=1$ and $A[rs]_{sr}=-1$.  Fix $\alpha \in \mathcal A(G)$, and consider the one-parameter family $\alpha(t)=\alpha+tA{[rs]}$ that goes through $\alpha$ in the $(rs)$ direction. The $t$-dependence of $\alpha(t)*h$ occurs  only in the $(rs)$ and $(sr)$ entries with 
\[(\alpha(t)*h)_{rs} = e^{it} e^{i \alpha_{rs}}h_{rs}
= e^{it}(h_{\alpha})_{rs}.\]
If $\lambda_k(h_{\alpha})$ is a simple eigenvalue then $t\mapsto \lambda_k(\alpha(t)*h)$ is an analytic function of $t$ around $t=0$, and its derivative at $t=0$ is the directional derivative of $\lambda_k(\alpha*h)$.

If $\lambda_k(h_{\alpha})$ has a nontrivial multiplicity then the function $\lambda_k(\alpha(t)*h)$ may fail to be differentiable.  The theorem of Kato (\cite[Thm.~1.8]{Kato}) and Rellich (\cite[Thm.~1]{Rellich}) implies that it is possible to find
analytic families of eigenvalues $\mu_k(t) \in \RR$ and eigenvectors $\phi_k(t)$, for all $t\in \RR$, so that $(\alpha(t)*h)\phi_k(t)
= \mu_k(t) \phi_k(t)$.  However the curves $\mu_k(t)$ may cross, when there are multiple eigenvalues, so the index $k$ does not necessarily
correspond to the order of these eigenvalues.  In other words, as $t$ varies, $\lambda_k(\alpha(t)*h)$ jumps between various analytic branches $\mu_j(\alpha(t)*h)$.  Let us choose one such analytic family or ``branch", 
$(\mu, \phi)$, and drop the subscript $k$, and define
\[ \Lambda: \TT^{\beta} \to \RR \ \text{ by }\ \Lambda(\alpha') = \mu(\alpha'*h).\]

Using Leibniz' dot notation to denote derivative with respect to $t$, and
differentiating $h(t)\phi = \mu(t) \phi(t)$ gives 
\begin{equation}\label{eqn-first-derivative}
(\dot h(t) - \dot \mu(t))\phi(t) + (h(t) - \mu(t))\dot \phi(t) = 0
\end{equation}
As in \cite[Lemma 2.5]{LocalTest} or 
\cite[\S 5.2]{AG}, taking the inner product with $\phi$ where $\Vert{\phi}\Vert = 1$, using that $h$ is Hermitian, and evaluating at $t = 0$  gives {\em the directional derivative of the eigenvalue $\mu$ along this branch:}
\begin{align}\begin{split}\label{eqn-partial-derivative}
\frac{\partial \Lambda}{\partial \alpha_{rs}}(\alpha) &=
\dot{\mu}= \left.\frac{d}{dt} \mu \left( (\alpha + t(A[rs])*h  \right)\right\vert_{t=0}=\langle \phi, \dot h\phi \rangle \\
& =i\left(\bar\phi_r \phi_s (h_{\alpha})_{rs} - \bar\phi_s \phi_r (\bar{h}_{\alpha})_{rs}\right) = -2\Im\left((h_{\alpha})_{rs} \bar\phi_r \phi_s\right)
\end{split}\end{align}
where $\phi_r = \phi(r)$ denotes the value of $\phi$ on the vertex $(r)$.

For later applications in equation \eqref{eqn-Omegajj}, consider the case when $\frac{\partial \Lambda}{\partial \alpha_{rs}}(\alpha)=0$. We can differentiate (\ref{eqn-first-derivative}) once again to obtain, as in \cite{LocalTest} Lemma 2.6,
\begin{equation}\label{eqn-ddoth}
\langle \phi, \ddot h \phi \rangle = 
-\Re\left((h_{\alpha})_{rs}\bar \phi_r \phi_s    \right)=-\left((h_{\alpha})_{rs}\bar \phi_r \phi_s    \right)
\end{equation}  and
\begin{equation} \label{ddot-mu} \frac{\partial^2\Lambda}{\partial \alpha_{rs}^2} = 
\ddot \mu = \langle \phi, \ddot h \phi \rangle + 
2 \Re\left(\langle \phi, \dot h \dot \phi \rangle\right).\end{equation}

\subsection{Proof of Proposition \ref{prop-current}}
The gauge invariance, $\Cu(e^{i\theta}h_{\alpha} e^{-i\theta}, e^{i\theta}\phi) = \Cu(h_{\alpha}, \phi)$ is straightforward from the definition.
The divergence is
\begin{align*}
     (d^*\Cu)_r &= \sum_s\Im(\bar\phi_r(h_{\alpha})_{rs}\phi_s) = \Im\left(\bar\phi_r\sum_s(h_{\alpha})_{rs}\phi_s\right)\\
     &= \Im(\bar \phi_r \lambda \phi_r) =
     \lambda\Im(|\phi_r|^2) = 0.
 \end{align*}
    If removing an edge $E=(rs)$ separates the graph into two pieces, say $G_A$ and $G_B$, let $\theta \in \Omega^0(G)$
take the value $1$ on $G_B$ and $0$ on $G_A$.  Then $d\theta$ is supported on $E$ and
\[\Cu_{rs} = \langle d\theta, \Cu \rangle = \langle \theta, d^*\Cu \rangle = 0.\]
Similarly, if $E, E'$ are two edges in a simple cycle that is disjoint from all others then removing both
separates the graph into two pieces.  Taking $\theta$ as above, 
\[ \Cu(E) - \Cu(E') = \langle d\theta, \Cu\rangle = 0. \]
Part (\ref{item-derivative}) is a restatement of equation (\ref{eqn-partial-derivative}).
\qed

\begin{lem}{\rm(Partial criticality.)}\label{lem-conjugation}  
Let $\alpha\in\mathcal{A}(G)$ and set $h_{\alpha} = \alpha*h$. Let $\phi$ be an eigenvector of simple eigenvalue of $h_{\alpha}$ and let $\Cu=\Cu(h_{\alpha},\phi)$ be the probability current. Suppose there is a bridge that splits the graph $G$ into $G_A$ and $G_B$. If $h_{\alpha}$ is real on the $G_B \times G_B$ block, then $\Cu$ vanishes on that block,
\[h_{\alpha}|G_{B}\in\mathcal{S}(G_{B})\Rightarrow\JJ|G_B = 0.\]
\end{lem}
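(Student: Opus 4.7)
The plan is to exploit the block decomposition of $h_\alpha$ induced by the bridge $(rs)$ with $r\in G_A$ and $s\in G_B$. Write $\phi=(\phi_A,\phi_B)$ with respect to the vertex partition and let $B:=h_\alpha|_{G_B\times G_B}\in\mathcal{S}(G_B)$; the off-diagonal block of $h_\alpha$ has only the single nonzero entry coming from the bridge. The restriction of $h_\alpha\phi=\lambda\phi$ to the $G_B$-component reads
\[(B-\lambda I)\phi_B \;=\; -c\,e_s,\qquad c:=(h_\alpha)_{sr}\phi_r\in\CC,\]
where $e_s$ is the standard basis vector at $s$. The crux is to establish the structural claim that $\phi_B=z\cdot w$ for some $z\in\CC$ and some \emph{real} vector $w\in\RR^{|G_B|}$. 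Once this is known, then for every edge $(uv)$ in $G_B$ we have $(h_\alpha)_{uv}=B_{uv}\in\RR$ and $\bar\phi_u\phi_v=|z|^2 w_u w_v\in\RR$, whence $\JJ_{uv}=\Im(B_{uv}\bar\phi_u\phi_v)=0$, which is the desired conclusion.

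To prove the structural claim I would split into two cases. \emph{Case 1:} $\lambda\notin\mathrm{spec}(B)$. Then $B-\lambda I$ is invertible over $\RR$, so $\phi_B=-c(B-\lambda I)^{-1}e_s$ is patently a complex scalar times the real vector $(B-\lambda I)^{-1}e_s$. \emph{Case 2:} $\lambda\in\mathrm{spec}(B)$. Here I would invoke simplicity of $\lambda$ as an eigenvalue of $h_\alpha$: since the bridge is the only edge joining $G_A$ and $G_B$, a direct calculation shows that for any real $\eta\in\ker(B-\lambda I)$ with $\eta_s=0$, the zero-extension $\tilde\eta$ to all of $V(G)$ satisfies $h_\alpha\tilde\eta=\lambda\tilde\eta$. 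By simplicity, $\phi\propto\tilde\eta$; hence $\phi_A=0$ and $\phi_B$ is a complex scalar times a real eigenvector of $B$. If instead no nonzero element of $\ker(B-\lambda I)$ vanishes at $s$, the Fredholm alternative applied to $(B-\lambda I)\phi_B=-ce_s$ forces $c=0$, and a two-vector combination argument (from any two $\RR$-linearly independent kernel elements $\eta_1,\eta_2$ one builds $\eta_{2,s}\eta_1-\eta_{1,s}\eta_2\in\ker(B-\lambda I)$ vanishing at $s$, contradicting the hypothesis unless $\eta_1\parallel\eta_2$) forces $\dim_\RR\ker(B-\lambda I)=1$; again $\phi_B$ is a complex scalar times the real generator.

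The step I expect to require the most care is Case 2: when $\lambda$ happens to lie in the spectrum of the $G_B$-block, the Fredholm alternative, the zero-extension trick, and the simplicity of $\lambda$ for the full operator $h_\alpha$ must be woven together to pin down both $\ker(B-\lambda I)$ and the bridge-end value $\phi_r$. The invertible case and the final reality computation are routine; all the non-generic subtlety is absorbed into this one case analysis on the spectrum of the real symmetric block $B$.
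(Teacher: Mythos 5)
Your proof is correct, and it takes a genuinely different route from the paper's. Both arguments reduce the lemma to the same structural claim — that $\phi_B$ is a complex scalar multiple of a real vector — but the mechanisms differ. The paper first invokes Proposition \ref{prop-current}(\ref{item-bridge}) to get $\JJ=0$ on the bridge itself, uses this to normalize $\phi$ so that both bridge endpoints take real values, and then observes that the partially conjugated vector $(\phi_A,\overline{\phi}_B)$ is again a $\lambda$-eigenvector, so simplicity forces $\overline{\phi}_B=\phi_B$ (with the degenerate case $\phi_A=0$ handled separately, and somewhat tersely: one still has to note that $\Re\phi_B$ and $\Im\phi_B$ both extend by zero to $\lambda$-eigenvectors of $h_\alpha$, so simplicity makes them proportional). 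You instead solve the $G_B$-block equation $(B-\lambda I)\phi_B=-c\,e_s$ directly: the real resolvent handles $\lambda\notin\mathrm{spec}(B)$, and for $\lambda\in\mathrm{spec}(B)$ you weave together the Fredholm alternative, the zero-extension of kernel elements vanishing at $s$ (which is an eigenvector of $h_\alpha$ precisely because $\eta_s=0$ kills the bridge coupling), and simplicity of $\lambda$ for the full matrix; the combination $\eta_{2,s}\eta_1-\eta_{1,s}\eta_2$ correctly forces the real kernel to be one-dimensional in the remaining sub-case, and $c=0$ there pins down $\phi_B$ up to a complex scalar. Your route is longer and requires the case split on $\mathrm{spec}(B)$, but it is self-contained in that it never uses the vanishing of the current on the bridge; the paper's conjugation trick is shorter and more conceptual but leans on that input. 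Every step of your Case 2 checks out, so there is no gap.
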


\begin{proof} Let $(rs)$ denote the bridge with $s\in G_{A}$ and $r\in G_{B}$. By changing gauge and scaling $h_{\alpha}$ if needed, we can assume that $(h_{\alpha})_{rs}=1$. Let
$e_{s}$ and $e_{r}$ be the corresponding standard basis vectors so that in the block decomposition to $G_{A},G_{B}$ we write $h_{\alpha}=A\oplus B +e_{r}e_{s}^{*}+e_{s}e_{r}^{*}$. Suppose the simple eigenvalue of interest is $\lambda=0$ (otherwise replace $h_{\alpha}$ with $h_{\alpha}-\lambda I$), and let $
\phi=(\phi_{A},\phi_{B})$ denote its normalized eigenvector. We need to show that if $B$ is real then $\phi_{B}$ is (proportional to) a real vector, in which case $\Cu|G_{B}=0$. If $\phi_{A}=0$ then $\phi_{B}\in\ker(B)$ and we are done. So assume $\phi_{A}\ne0$. 
By Proposition \ref{prop-current} (\ref{item-bridge}) we know that $\mathrm{Im}[(h_{\alpha})_{rs}\bar{\phi}_{r}\phi_{s}]=\mathrm{Im}[\bar{\phi}_{r}\phi_{s}]=0$, so by scaling $\phi$ if needed we can assume that $\phi(r)$ and $\phi(s)$ are real. We will now show that $\phi'=(\phi'_{A},\phi'_{B}):=(\phi_{A},\overline{\phi}_{B})$ is also in $\ker(h_{\alpha})$, which means that $\phi'=\phi$ since the kernel is one-dimensional and $\phi_{A}\ne 0$. This is true because 
\[h_{\alpha}\phi=0\Rightarrow A\phi_{A}+\phi(r)e_{s}=0\ \text{ and }\ B\phi_{B}+\phi(s)e_{r}=0,\]
and $\phi(r)=\phi'(r),\phi(s)=\phi'(s)$, so 
\begin{align*}
    \left(A\oplus B +e_{r}e_{s}^{*}+e_{s}e_{r}^{*}\right)\phi'= & (A\phi_{A}+\phi(r)e_{s},B\bar{\phi}_{B}+\phi(s)e_{r})\\
    = & (0,B\bar{\phi}_{B}+\phi(s)e_{r}),
\end{align*}
and since $B$ and $\phi(s)$ are real, then $B\bar{\phi}_{B}+\phi(s)e_{r}=\overline{(B\phi_{B}+\phi(s)e_{r})}=0$.\qedhere

\end{proof}

We now return to the special case of $G$ that has disjoint cycles. Recall that  
each $\epsilon
\in \{0,\pi\}^{\beta} $ is a non-degenerate critical point of $\Lambda_{k}: \TT^{\beta} \to \RR$.

\begin{cor} \label{cor-diagonal}  Suppose $G$ has disjoint cycles and $h \in \mathcal S(G)$ satisfies [GSC], then for each $k$, the Hessian of $\Lambda_{k}$ at any $\epsilon
\in \{0,\pi\}^{\beta} \subset\TT^{\beta}$ is diagonal with respect to the basis of $\TT^{\beta}$
that was chosen in \S \ref{spanning_tree}.
\end{cor}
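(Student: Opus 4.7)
The plan is to show that each off-diagonal Hessian entry $\partial^{2}\Lambda_{k}/\partial\alpha_{r_{j}s_{j}}\partial\alpha_{r_{i}s_{i}}$ at a symmetry point $\epsilon$ vanishes, by exhibiting a one-parameter direction along which the partial derivative $\partial\Lambda_{k}/\partial\alpha_{r_{i}s_{i}}$ is identically zero. Proposition~\ref{prop-current}(\ref{item-derivative}) reduces this to showing that the probability current along the edge $(r_{i}s_{i})$ stays zero as we perturb $h_{\epsilon}$ in the $e_{j}$ direction, which is precisely the setup for the Partial Criticality Lemma~\ref{lem-conjugation}.

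To apply that lemma, the first task is a combinatorial step exploiting the disjoint-cycles hypothesis. The basis edges $e_{i}=A[r_{i}s_{i}]$ and $e_{j}=A[r_{j}s_{j}]$ lie in distinct simple cycles $C_{i}$ and $C_{j}$ which share no vertex. Take a minimal path in $G$ from $C_{i}$ to $C_{j}$: its first edge $b$ cannot belong to any simple cycle, since such a cycle would meet $C_{i}$ at a vertex and therefore equal $C_{i}$ by disjointness, contradicting minimality. Hence $b$ is a bridge, and removing it partitions $G$ into components $G_{A}\supseteq C_{j}$ and $G_{B}\supseteq C_{i}$.

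Next I consider the curve $\alpha(t)=\epsilon+t\,e_{j}$ in $\TT^{\beta}$. Since $e_{j}$ only affects one edge inside $G_{A}$, the matrix $h_{\alpha(t)}$ agrees with $h_{\epsilon}$ on every entry of the $G_{B}\times G_{B}$ block; because $\epsilon$ is a symmetry point, $h_{\epsilon}$ is a signing and hence real there. [GSC] together with continuity of eigenvalues produces an analytic branch of simple eigenvalues and normalized eigenvectors $\phi(t)$ through $\phi(0)=\phi$. Lemma~\ref{lem-conjugation} applied to the bridge $b$ then forces $\JJ_{r_{i}s_{i}}(h_{\alpha(t)},\phi(t))=0$ for all sufficiently small $t$, and Proposition~\ref{prop-current}(\ref{item-derivative}) gives
\[
\frac{\partial\Lambda_{k}}{\partial\alpha_{r_{i}s_{i}}}\bigl(\alpha(t)\bigr)
= -2\,\|\phi(t)\|^{2}\,\JJ_{r_{i}s_{i}}\bigl(h_{\alpha(t)},\phi(t)\bigr)\equiv 0.
\]
Differentiating in $t$ at $0$ produces the vanishing of the mixed second derivative, which handles the $(i,j)$ off-diagonal entry of the Hessian; running this over all pairs $i\ne j$ completes the proof. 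The main obstacle is the combinatorial step of locating the separating bridge; after that the argument is a direct application of Lemma~\ref{lem-conjugation} followed by routine differentiation.
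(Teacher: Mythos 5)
Your proposal is correct and follows essentially the same route as the paper: perturb along $e_j$ from the symmetry point, note that the $G_B\times G_B$ block stays real, invoke the partial criticality Lemma~\ref{lem-conjugation} together with Proposition~\ref{prop-current}(\ref{item-derivative}) to get $\partial\Lambda_k/\partial\alpha_{r_is_i}\equiv 0$ along the curve, and differentiate. Your explicit construction of the separating bridge via a minimal path between the two cycles is a correct filling-in of a step the paper merely asserts.
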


\begin{proof} 
Fix $k$ and $\epsilon$. We work in the previously chosen (\S \ref{spanning_tree}) basis of $T_{\epsilon}\TT^{\beta}\cong\RR^{\beta}$ given by the choice of a single edge per cycle of $G$, say $(r_j,s_j) \in \gamma_j$. We will show that 
  \begin{equation}\label{eqn-second-partial}
  \frac{\partial^2\Lambda_k}{\partial \alpha_{r_1s_1} \partial \alpha_{r_2s_2}}(\epsilon) = 0.  \end{equation}
(All other off-diagonal terms vanish for the same reason). Since the cycles are disjoint there exists a bridge that separates the graph into two parts, $G_A, G_B$ with $\gamma_1 \subset G_A$ and $\gamma_2 \subset G_B$.
Let $\alpha(t) = \epsilon + tA[r_1,s_1]$ and let
$h_t = \alpha(t)*h$.  The matrix $h_t$ is real except for the $(r_1,s_1)$ and
$(s_1,r_1)$ entries so we may apply Lemma \ref{lem-conjugation} and part (5) of Proposition \ref{prop-current} to
conclude that
\[ \frac{\partial \Lambda_{k}}{\partial \alpha_{r_2,s_2}}(\alpha(t)) = 0
\]
for all $t$ around $0$.  Differentiating with respect to $t$, at $t=0$, gives
equation (\ref{eqn-second-partial}).\qedhere \end{proof}

\section{Montonicity}\label{sec-montonicity}

\begin{lem} $(${\bf Montonicity}$)$\label{lem-monotonicity} Suppose $G$ has a cycle $\gamma$ disjoint from all others and let $h\in\mathcal{S}(G)$ that satisfies [GSC]. Consider any one-parameter family $h_{t}=\alpha_{t}*h$ with  $\alpha_{t}$ supported on $\gamma$ and $\int_{\gamma}\alpha_{t} = t$ for all $t\in[0,\pi]$.   Then, $t\mapsto\lambda_{k}(h_{t})$ is strictly monotone in $t\in [0,\pi]$ for all $k\in [n]$. 
\end{lem}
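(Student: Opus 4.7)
The plan is to exploit the especially simple dependence of the characteristic polynomial of $h_t$ on $t$, and then rule out the unique obstruction to strict monotonicity --- a ``flat'' eigenvalue that is $t$-independent --- using the probability current together with [GSC].

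First, by the discussion in \S 2.3, any two $\alpha, \alpha' \in \mathcal{A}(G)$ supported on $\gamma$ with $\int_\gamma \alpha = \int_\gamma \alpha'$ differ by an exact 1-form, hence are gauge equivalent. I may therefore replace $\alpha_t$ by $t \cdot A[r_0 s_0]$ for a fixed edge $(r_0, s_0)\in\gamma$ without affecting the spectrum of $\alpha_t * h$. Expanding $\det(h_t - \lambda I)$ by the Leibniz formula, the disjoint-cycle hypothesis ensures that any permutation cycle containing the phased edge $(r_0,s_0)$ must traverse all of $\gamma$; grouping the two orientations yields
\[ p(\lambda,t) := \det(h_t - \lambda I) = A(\lambda) + 2 B(\lambda)\cos t, \]
where $A(\lambda) \in \RR[\lambda]$ and $B(\lambda) = \pm\bigl(\prod_{(rs)\in\gamma} h_{rs}\bigr)\,\det\!\bigl((h-\lambda I)|_{V(G)\setminus V(\gamma)}\bigr)$. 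Since $\cos$ is strictly decreasing on $[0,\pi]$, for any $\lambda_0$ not a common root of $A$ and $B$ there is at most one $t \in [0,\pi]$ at which $\lambda_0$ is an eigenvalue of $h_t$.

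Next I argue that [GSC] rules out common roots of $A$ and $B$. Suppose $\mu$ is such a root; then $\mu$ is an eigenvalue of $h_t$ for every $t$. By [GSC], $\mu$ is a simple eigenvalue of $h_0$ with nowhere-vanishing eigenvector $\phi(0)$. By Kato--Rellich, $\mu$ remains a simple eigenvalue of $h_t$ for $t$ in some interval $[0,\delta)$, and a unit eigenvector $\phi(t)$ can be chosen analytically on this interval, nowhere vanishing (by continuity and compactness of $V(G)$) for $t$ small enough. Since $\mu$ is constant in $t$, Proposition~\ref{prop-current}(5) gives
\[ 0 = \frac{d\mu}{dt} = -2\,\JJ_{r_0 s_0}(h_t,\phi(t)) = -2\,\JJ_\gamma(t), \]
by the constancy of $\JJ$ along $\gamma$ (Proposition \ref{prop-current}(4)). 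Writing the vertices of $\gamma$ cyclically as $v_1,\dots,v_m,v_{m+1}=v_1$, consider
\[ P(t) := \prod_{i=1}^{m} (h_t)_{v_i v_{i+1}}\,\overline{\phi(t)(v_i)}\,\phi(t)(v_{i+1}) = e^{it}\Bigl(\prod_{i=1}^m h_{v_i v_{i+1}}\Bigr)\prod_{v \in V(\gamma)}|\phi(t)(v)|^2. \]
When $\JJ_\gamma(t)=0$ each factor on the left has vanishing imaginary part, forcing $P(t) \in \RR$. But the right-hand side equals $e^{it}$ times a nonzero real number (since $\phi(t)$ is nowhere vanishing on $\gamma$ and $h$ is strictly supported), which is real only for $t \in \{0,\pi\}\bmod 2\pi$. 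For $t \in (0,\delta)$ this is a contradiction, so no common root $\mu$ can exist.

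Consequently, by the first paragraph, the continuous map $t \mapsto \lambda_k(h_t)$ on $[0,\pi]$ is injective --- any equality $\lambda_k(h_{t_1})=\lambda_k(h_{t_2})$ with $t_1\ne t_2$ would force that common value to be a common root of $A$ and $B$ --- and a continuous injection of an interval is strictly monotone. The main subtlety of the argument is in the Kato--Rellich selection of an analytic family $\phi(t)$ extending $\phi(0)$ in the flat-eigenvalue step; this is routine here because $\mu$ is simple at $t=0$, so the rank-one eigenprojection of $h_t$ onto the $\mu$-eigenspace varies analytically on a neighborhood of $0$.
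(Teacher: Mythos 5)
Your proof is correct, and it takes a genuinely different route from the paper's. The paper follows the analytic branches $(\mu_k(t),\phi_k(t))$ of Kato--Rellich, computes $\dot\mu_k=-2\JJ_{12}$, and then shows $\JJ_{12}\ne 0$ for all interior $t$ by combining a ``flat band criteria'' lemma (if $\JJ_{12}=0$ at some $t_0\in(0,\pi)$ then the eigenvalue is constant in $t$ --- a lemma that requires a case analysis because the eigenvector at $t_0$ may vanish at vertices) with the nondegeneracy of the critical point at $t=0$ and the diagonality of the Hessian; it must then separately handle crossings of branches via a sign-consistency argument for $\JJ$ on a degenerate eigenspace. You instead exploit the exact form $\det(h_t-\lambda I)=A(\lambda)+2B(\lambda)\cos t$ (valid precisely because $\gamma$ is the unique simple cycle through the phased edge), which reduces strict monotonicity to injectivity of $t\mapsto\lambda_k(h_t)$ and hence to the absence of a common root of $A$ and $B$, i.e.\ of a flat band; and you exclude the flat band only near $t=0$, where [GSC] guarantees a simple eigenvalue with nowhere-vanishing eigenvector, via the telescoping holonomy identity $\prod_i (h_t)_{v_iv_{i+1}}\overline{\phi(v_i)}\phi(v_{i+1})=e^{\pm it}\cdot(\text{nonzero real})$. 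This buys two real simplifications: you never need the interior-point version of the flat-band lemma (with its vanishing-eigenvector case analysis), and the characteristic-polynomial injectivity argument dissolves the eigenvalue-crossing issue entirely. What the paper's route buys in exchange is structural information reused elsewhere: the nonvanishing and sign-definiteness of $\JJ$ (equivalently of $\dot h_t$ restricted to eigenspaces) at every interior $t$, which is exactly the BZ regularity condition discussed in Appendix B, whereas your argument only certifies the monotonicity itself. One small point worth making explicit in a final write-up: the reduction to $\alpha_t=tA[r_0s_0]$ uses that $\alpha_t-tA[r_0s_0]$ has zero integral over \emph{every} cycle (over $\gamma$ by hypothesis, and over all other cycles because they are edge-disjoint from $\gamma$), hence is exact.
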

We remark that, up to gauge equivalence, we may suppose that $\alpha$ is supported on a single edge of $\gamma$.   
This means the family $h_t $ traverses a single segment in the 1-skeleton of the 
hypercube $\square \subset \mathcal M_h$ from \S \ref{subsec-combinatorics}.  The monotonicity lemma {\em only}
applies to these special paths.
The rest of \S \ref{sec-montonicity} is devoted to the proof Lemma \ref{lem-monotonicity}, which appears
finally in \S \ref{subsec-proof-montonicity}.

\begin{lem}[Flat band criteria]\label{lem-cycle criticality}
Suppose $G$ has a cycle $\gamma$ disjoint from all others, and $(12)$ is an edge in $\gamma$. Let $h\in\mathcal{S}(G)$ and consider a one-parameter 
family $h_{t}=\alpha_{t}*h$ {\color{black}{
where $\alpha_{t}\in\mathcal{A}(G)$ satisfies $\alpha_{t}=\alpha_{0}$ outside of $\gamma$ and $\int_{\gamma}\alpha_t = t$  for all $t\in [0,\pi]$.}} 
Suppose there exists $t_{0}\in(0,\pi)$, and an eigenvector $\phi$ of $h_{t_{0}}$ with eigenvalue $\lambda$, such that $\Cu(h_{t_{0}},\phi)_{12}=0$. Then $\lambda$ is a common eigenvalue of all $h_{t}$ with $t\in[0,\pi]$.  
\end{lem}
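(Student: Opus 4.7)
The plan is to study $P(t):=\det(h_t-\lambda I)$ on $t\in[0,\pi]$ and to show $P\equiv 0$, whence $\lambda\in\mathrm{spec}(h_t)$ for every $t$. Since the spectrum, the hypotheses, and $P(t)$ itself are gauge-invariant, and since the gauge class of $h_t$ depends only on the flux $\int_\gamma\alpha_t=t$, I may replace $\alpha_t$ by $\alpha_0+t\,A[12]$; a preliminary gauge transformation (possible because $\int_\gamma\alpha_0=0$ makes $\alpha_0|_\gamma$ exact on $\gamma$) then makes $\alpha_0|_\gamma=0$. In this gauge, $(h_t)_{12}=e^{it}h_{12}$, while every other entry of $h_t$ is independent of $t$; moreover every $\gamma$-edge other than $(12)$ carries a real entry.

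Because $P$ depends on $t$ only through the complex-conjugate pair $(h_t)_{12}$ and $(h_t)_{21}$, and linearly in each, Hermiticity of $h_t$ forces
\[ P(t)=K+Le^{it}+\overline{L}\,e^{-it}\qquad\text{for some } K\in\RR,\ L\in\CC. \]
The heart of the argument is the claim $L\in\RR$, which is where the disjoint-cycles hypothesis enters. In the Leibniz expansion of $\det(h_t-\lambda I)$, the coefficient of $e^{it}$ collects the permutations $\sigma$ with $\sigma(1)=2$ and $\sigma(2)\neq 1$; the $\sigma$-orbit of $1$ is then a directed simple cycle of $G$ through the edge $(1,2)$. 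By disjointness this orbit must be $\gamma$, and its contribution (after stripping $e^{it}$) is $\pm\prod_{(rs)\in\gamma}h_{rs}\in\RR$ because all $\gamma$-edges other than $(12)$ carry real entries in our gauge. The rest of $\sigma$ permutes $V(G)\setminus V(\gamma)$, and summed over such restrictions yields $\det\bigl((h_0-\lambda I)|_{V(G)\setminus V(\gamma)}\bigr)$, a real number as the determinant of a Hermitian principal submatrix. Hence $L\in\RR$ and $P(t)=K+2L\cos t$.

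To conclude, I combine the two hypotheses. The assumption $\lambda\in\mathrm{spec}(h_{t_0})$ gives $P(t_0)=0$. The condition $\JJ_{12}(h_{t_0},\phi)=0$ gives $P'(t_0)=0$: by Jacobi's formula $P'(t)=\Tr\bigl(\mathrm{adj}(h_t-\lambda I)\,\dot h_t\bigr)$, and if $\lambda$ is simple at $t_0$ this reduces (by the computation behind Proposition \ref{prop-current}(\ref{item-derivative})) to a nonzero scalar multiple of $\JJ_{12}(h_{t_0},\phi)$; if instead $\lambda$ has multiplicity $\geq 2$ at $t_0$, then $\mathrm{adj}(h_{t_0}-\lambda I)=0$ and $P'(t_0)=0$ trivially. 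Since $P'(t)=-2L\sin t$ and $\sin t_0\neq 0$ on $(0,\pi)$, we obtain $L=0$; then $P(t)=K$ is constant, and $P(t_0)=0$ forces $K=0$, so $P\equiv 0$ on $[0,\pi]$. The only genuine obstacle is the reality of $L$: without the disjoint-cycles assumption, other simple cycles sharing vertices with $\gamma$ would permit additional contributing permutations whose $e^{it}$-coefficients could carry nontrivial phases from $\alpha_0$, and $L$ would generally be complex.
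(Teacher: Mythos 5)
Your proof is correct, and it takes a genuinely different route from the paper's. The paper argues constructively with eigenvectors: it shows that $\JJ_{12}=0$ forces $\phi$ to vanish either on two adjacent vertices of $\gamma$ or on a vertex of $\gamma$ of degree $\ge 3$ (via a flux-counting argument mod $\pi$), and then exhibits an explicit flat-band family $\phi_t$ with $h_t\phi_t=\lambda\phi_t$, in the second case by rescaling $\phi$ on the branch hanging off the zero vertex by a $t$-dependent constant $c(t)$. You instead work with $P(t)=\det(h_t-\lambda I)$: after the (legitimate) gauge reduction to $\alpha_t=\alpha_0+tA[12]$ with $\alpha_0|_\gamma=0$, the Leibniz expansion gives $P(t)=K+Le^{it}+\overline{L}e^{-it}$, and the disjoint-cycles hypothesis enters exactly where you say it does --- the only permutations contributing to $L$ have the $\sigma$-orbit of vertex $1$ equal to $\gamma$ (two distinct simple cycles through the edge $(12)$ would share vertices), so $L\in\RR$ and $P(t)=K+2L\cos t$; the two conditions $P(t_0)=0$ and $P'(t_0)=0$ at an interior point then kill $L$ and $K$. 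Your treatment of $P'(t_0)=0$ is careful in the right place: the simple case reduces via the adjugate to $\JJ_{12}=0$, and the multiple case is trivial since the adjugate vanishes. Your argument is shorter and isolates the role of disjointness more transparently (it also explains the $\cos t$ profile underlying the monotonicity lemma); what the paper's argument buys is the explicit flat-band eigenvectors and the structural dichotomy on the zero set of $\phi$ along $\gamma$, and it is phrased in a way that transfers to settings (such as quantum graphs) where a finite characteristic polynomial is not available.
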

\quash{
\begin{remark}[Intuitive explanation of the proof]
     Heuristically, the proof has two steps The proof has two steps, one step First note that $\Cu|\gamma=0$ since $\Cu$ is constant on $\gamma$, which means that  we  $\Cu_{rs}=0$ condition means that $\Cu|\gamma=0$ and therefore  vanish on all 
     An intuitive explanation for the proof is that having t $\gamma\cap\mathrm{support}(\phi)$ is a union of paths, and on each such path, $\phi$ has a constant phase. These phases add up to $t_{0}\mod{\pi}$, which we show implies the existence of a vertex of high degree on which $\phi$ vanishes. By a choice of Gauge the $t$ dependence can appear only at the equation at that vertex, and by adjusting the phases and properly scaling the parts of $\phi$, we create a $\lambda$ eigenvector for any $t$. 
\end{remark}
}
\begin{proof}
\ct{Without loss of generality assume that $\lambda=0$, so that  $h_{t_{0}}\phi=0$. We need to provide a family of vectors $\phi_{t}$ such that $h_{t}\phi_{t}=0$ for all $t$. We will show that} {\color{black}{ $\JJ_{12}=0$ implies }} \ct{either:
\begin{enumerate}
    \item[(i)] there is an edge $(rs)$ in $\gamma$ such that $\phi(r)=0$ and $\phi(s)=0$, or
    \item[(ii)] there is a vertex $r$ in $\gamma$ such that $\phi(r)=0$ and $\deg(r)\ge 3$.
\end{enumerate}
We will also show that each of these conditions is sufficient for constructing $\phi_{t}$ such that $h_{t}\phi_{t}=0$ for all $t$.}\\
To ease notation let $\alpha=\alpha_{t_{0}}$. To avoid triple subscripts write $\alpha(rs)$ for $\alpha_{rs}$. 

\ct{First we show that (i) is sufficient.}
Let $(rs)$ be an edge in $\gamma$
\ct{such that} $\phi(r)=\phi(s)=0$. \ct{Up to gauge equivalence, we may assume that  $\alpha_{t}=\alpha$ on $G\setminus\gamma$,
that $\alpha_{t}(rs)=t$, and that $\alpha_t$ vanishes on all the other edges in $\gamma$. } 
Then \ct{ $h_{t}\phi=h_{t_{0}}\phi=0$ for all $t$ so we may take $\phi_t = \phi$.} 


\ct{Next, we show, using $\Cu_{12}=0$, that if (i) fails then (ii) must hold.} Assume (i) fails, namely
\begin{enumerate}
    \item[(A)]for every edge $(rs)$ in $\gamma$, $\phi(r)$ and $ \phi(s)$ are not both zero.
\end{enumerate}
{\color{black}{By proposition \ref{prop-current}, $\Cu=\Cu(h_{t_{0},\phi})$ is gauge invariant and constant on $\gamma$, so $\Cu_{rs}=\Cu_{12}=0$ for every edge $(rs)$ in $\gamma$.}} To prove that (ii) holds, up to change of gauge we may assume that $\phi$ is real, cf. equation
(\ref{eqn-phase}).  In this case,  $\Cu_{rs}=\phi(r)\phi(s)h_{rs}\sin(\alpha(rs))=0$ for any $(rs)\in\gamma$, so 
{\color{black}{$\alpha(rs)=0\mod{\pi}$ when $\phi(r)\phi(s)\ne0$, and so (A) gives \begin{equation}\label{eqn-sum all} \sum_{\phi(r) = 0} \alpha(sr) + \alpha(rt) = \int_{\gamma}\alpha_{t_{0}}=t_{0}\ne 0\mod{\pi},\end{equation}
where $s<r<t$ denote the neighbors of $r$.}}
Now suppose $r$ is a vertex in $\gamma$ of degree 2 with $\phi(r) = 0$. Let $s < r < t$ be the two vertices attached
to $r$.  Then $\phi(s) \ne 0$ and $\phi(t) \ne 0$ by (A) so {\color{black}{$(h_{t_{0}}\phi)_{r}=0$}}\quash{the eigenvalue equation} reads
\[
\phi(s)h_{sr}e^{i\alpha(sr)} + 0 + \phi(t) h_{rt} e^{i\alpha(rt)} = 0
\]
which implies
\[\alpha(sr) + \alpha(rt) = 0 \mod \pi\]
whenever $\deg(r) = 2$ and $\phi(r) = 0$.
Adding over all vertices $r$ of degree 2, such that $\phi(r) = 0$ gives
\begin{equation}\label{eqn-sum} \sum_{\phi(r) = 0\ct{,\deg(r)=2}} \alpha(sr) + \alpha(rt) = 0\ct{\mod{\pi}}\end{equation}
where, as before, $s<r<t$ denote the neighbors of $r$.
The terms in this sum are disjoint by (A). {\color{black}{Since the sums in \eqref{eqn-sum} and \eqref{eqn-sum all} are not equal, then there must be a vertex $r\in\gamma$ with $\deg(r)\ne 2$ and $\phi(r)=0$, so (ii) holds.}} 

\ct{Finally, assuming (ii) we construct $\phi_{t}$.} 
Without loss of generality, we may suppose that we have \ct{consecutive} vertices
$1<2<3$ in $\gamma$ with $\phi(2)=0,\deg(2)\ge 3.$  \ct{Up to gauge equivalence, we may assume that $(\alpha_{t})_{12}=t$,} {\color{black}{$(\alpha_{t})_{rs}=0$ for all other edges $(rs)$}}
\ct{in $\gamma$, and $\alpha_{t}=\alpha$ on $G\setminus\gamma$.}

Let $H$ denote the union of connected components of $G\setminus\gamma$ that are connected to vertex $2$ in $G$.  We will show there exists
$c(t) \in \CC$ \ct{with $c(t_{0})=1$} so that the vector $\phi_{\ct{t}}$ defined by
  \[(\phi_{\ct{t}})_{r}=\begin{cases} c\ct{(t)} \phi_{r} & \text{for}\ r\in H\\
        \phi_{r} & \text{for}\  r\notin H
    \end{cases} \]
\ct{satisfies $h_{t}\phi_{t}=0$ for all $t$.}
By our assumption on $G$, the only vertex in $G\setminus H$ with a neighbor in $H$ is the vertex $2$ on which $\phi(2)=0$. Therefore \ct{$(h_{t}\phi_{t})_{r}\propto(h_{t_{0}}\phi)_{r}=0$}
for all $r\ne 2$.  To 
understand the situation at vertex $2$ let
\[F_{2}(\alpha,\phi)=\sum_{r\in H, r\sim 2}h_{2r}e^{i\alpha({2r})}\phi(r).\]
Then $F_{2}(\alpha_{t},\phi_{\ct{t}})=c\ct{(t)}F_{2}(\alpha,\phi) $ and at vertex 2 we have
\[\ct{(h_{t_{0}}\phi)_{2}=}\quash{\left((h_{\alpha}-\lambda)\phi\right)(2)=}\phi(1)h_{12}e^{i\ct{t_{o}}}+\phi(3)h_{23}\quash{e^{i\alpha(23)}}+F_{2}(\alpha,\phi)=0.\]
Therefore, 
\[-F_{2}(\alpha,\phi)=\quash{\left((h_{\alpha}-\lambda)\phi\right)(2)=}\phi(1)h_{12}e^{i\ct{t_{o}}}+\phi(3)h_{23}\quash{e^{i\alpha(23)}}\ne 0\] since it is not real. The eigenvalue equation at vertex 2 becomes

\begin{align*}
  \ct{(h_{t}\phi_{t})_{2}=}\quash{\left((h_{\alpha_{t}}-\lambda)\phi_{c}\right)(2)=} & \phi(1)h_{12}e^{it}+\phi(3)h_{23}+F_{2}(\alpha_{t},\phi_{c})\\
  = & \phi(1)h_{12}e^{it}+\phi(3)h_{23}+c\ct{(t)} F_{2}(\alpha,\phi).
\end{align*}
\ct{It is left to choose $c(t)$ so that $(h_{t}\phi_{t})_{2}=0$, namely }
\begin{equation}
  c\ct{(t)}=-\frac{\phi(1)h_{12}e^{it}+\phi(3)h_{23}}{F_{2}(\alpha,\phi)}=\frac{\phi(1)h_{12}e^{it}+\phi(3)h_{23}}{\phi(1)h_{12}e^{it_{0}}+\phi(3)h_{23}}.  \qedhere
\end{equation}

\end{proof}
\quash{
\subsection{}  \label{subsec-consequence}
Suppose $\gamma$ is a cycle disjoint from all others in $G$ and
suppose $\alpha \in \mathcal A(G)$ is supported on a single edge of $\gamma$, say $\alpha = A[12]$ in
the notation of \S \ref{subsec-derivative}.  Abbreviate $h_{t}=(t\alpha)*h$ so that $(h_{t})_{ij}=h_{ij}$ 
for all $ij$ except for
$(h_{t})_{12}=h_{12}e^{it}$ and $(h_{t})_{21}=h_{12}e^{-it}$.
}
\begin{lem}\label{lem-Jnot0} In the setting of Lemma \ref{lem-monotonicity}, if $(rs)$ is an edge in $\gamma$ and $\phi$ is a normalized eigenvector of $h_{t}$, for some $t\in(0,\pi)$, then $\Cu=\Cu(h_{t},\phi)$ has $\Cu_{rs}\ne0$. 

In particular, if $\phi$ and $\phi'$ are eigenvectors of the same eigenvalue of $h_{t}$, then 
$\Cu(h_t, \phi)_{rs}$ and $\Cu(h_t, \phi')_{rs}$ share the same sign.
\end{lem}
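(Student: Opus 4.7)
The plan is to deduce the first assertion as the contrapositive of the flat-band criterion (Lemma \ref{lem-cycle criticality}), with [GSC] supplying the contradiction. Suppose, for contradiction, that $\Cu(h_t,\phi)_{rs}=0$ for some $t\in(0,\pi)$ and some edge $(rs)$ of $\gamma$. After a gauge change I may concentrate $\alpha_t$ on a single edge of $\gamma$; this leaves $\Cu$ invariant by Proposition \ref{prop-current}, and places me exactly in the setting of Lemma \ref{lem-cycle criticality}. That lemma then produces eigenvectors $\phi_{t'}$ of $h_{t'}$ of a common eigenvalue $\lambda$ for every $t'\in[0,\pi]$. Tracing through its proof, the dichotomy (i)/(ii) established there forces $\phi=\phi_{t}$ to vanish either on two adjacent vertices of $\gamma$ (case (i)) or at a vertex of $\gamma$ of degree at least three (case (ii)); in both cases the constructed $\phi_{0}$ inherits a zero coordinate. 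Since $h_0$ is a signing of $h$ and [GSC] requires every eigenvector of every signing to be nowhere vanishing, this is the desired contradiction.

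For the second assertion I would combine the two eigenvectors linearly. If $\phi,\phi'$ are linearly dependent over $\CC$, then $\phi'=\mu\phi$ for some $\mu\in\CC^{\times}$ and $\Cu(h_t,\phi')_{rs}=|\mu|^2\,\Cu(h_t,\phi)_{rs}$, which trivially has the same sign. Otherwise $\phi,\phi'$ are linearly independent, so $\phi_s:=\phi+s\phi'$ is a nonzero $\lambda$-eigenvector of $h_t$ for every $s\in\RR$. Expanding the definition of $\Cu$ gives
\[
  \Cu(h_t,\phi_s)_{rs} \;=\; \Cu(h_t,\phi)_{rs}\,+\,s\,b\,+\,s^2\,\Cu(h_t,\phi')_{rs}
\]
for some real constant $b$. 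By the first assertion this real quadratic in $s$ is nowhere zero on $\RR$, hence has constant sign there; comparing its value at $s=0$ with its sign as $|s|\to\infty$ then forces $\Cu(h_t,\phi)_{rs}$ and $\Cu(h_t,\phi')_{rs}$ to carry the same sign.

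The main obstacle is the first claim. The technical delicacy lies in the explicit verification, from the proof of Lemma \ref{lem-cycle criticality}, that the flat-band eigenvector $\phi_{0}$ truly inherits a vanishing coordinate from the dichotomy (i)/(ii) — this is the precise point at which [GSC] meets the flat-band mechanism to extract a contradiction. Once the first claim is in hand, the sign part of the lemma follows from the elementary quadratic-polynomial argument above.
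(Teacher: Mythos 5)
Your proof is correct, but the first (and main) assertion is established by a genuinely different mechanism than the paper's. The paper also reduces to the flat-band criterion, but it uses only the \emph{statement} of Lemma \ref{lem-cycle criticality}: if $\Cu_{rs}=0$ then $\lambda$ is an eigenvalue of every $h_{t}$, so some analytic eigenvalue branch is constant near $t=0$, and this contradicts the fact that $\epsilon=0$ is a nondegenerate critical point of $\Lambda_k$ with diagonal Hessian (Corollary \ref{cor-diagonal}, resting on \cite{NodalCount,CdV,AG}), i.e.\ that $\partial^2\Lambda_k/\partial\alpha_{rs}^2\neq 0$ there. You instead open up the \emph{proof} of Lemma \ref{lem-cycle criticality}: the dichotomy (i)/(ii) forces the flat-band eigenvector $\phi_0$ of $h_0$ to vanish at a vertex, which contradicts the nowhere-vanishing clause of [GSC]. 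This is valid — one small imprecision is that $h_0=\alpha_0*h$ is only \emph{gauge-equivalent} to a signing (since $\int_\gamma\alpha_0=0$ makes $\alpha_0$ exact), not literally one, but gauge transformations multiply eigenvector entries by unit phases, so the vanishing coordinate and the contradiction survive. What your route buys is independence from the nondegenerate-critical-point input (only the nowhere-vanishing hypothesis of [GSC] is used); what it costs is modularity, since it relies on the internal construction of the flat-band proof rather than its statement. For the second assertion, your explicit quadratic $s\mapsto\Cu(h_t,\phi+s\phi')_{rs}$ is a concrete one-parameter version of the paper's argument, which simply notes that $\psi\mapsto\Cu(h_t,\psi)_{rs}$ is continuous and nonvanishing on the connected set $V\setminus\{0\}$ and hence of constant sign; both work, and your observation that the leading coefficient equals $\Cu(h_t,\phi')_{rs}\neq0$ is exactly what legitimizes the sign-at-infinity comparison.
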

\begin{proof}
Since $h$ satisfies [GSC] then each of the eigenvalues has $\Lambda_{k}(\alpha)=\lambda_{k}(\alpha*h)$ has a non-degenerate critical point at $\alpha=0$, namely at $h=h_{0}$, whose Hessian is diagonal by Corollary \ref{cor-diagonal}. In particular, for any $k\in[n]$, $\Lambda_{k}(\alpha_{t})=\lambda_{k}(h_{t})$ is not constant around $t=0$. This means that $\Cu_{rs}\ne0$ for any normalized eigenvector of any $h_{t}$ with $t\in(0,\pi)$, otherwise we would get a ``flat band'', namely a constant eigenvalue $\lambda_{k}(h_{t})\equiv \lambda$ for all $t$ around $t=0$ by Lemma \ref{lem-cycle criticality}. This concludes the first part.

Now let $V=\ker(h_{t}-\lambda_{k}(h_{t}))$ be some eigenspace of some $h_{t}$ with $t\in(0,\pi)$, and assume $\dim(V)\ge 2$. Then the map $\phi\mapsto \Cu(h_{t},\phi)$ is a continuous map from $V\setminus\{0\}$ (which is connected) to $\RR\setminus\{0\} $ so its image must lie either in $\RR_{>0}$ or in $\RR_{<}$. $V\setminus \{0\}$.
\end{proof}   
     
 \subsection{Proof of Lemma \ref{lem-monotonicity}}\label{subsec-proof-montonicity}
 The statement is gauge invariant, so we may fix the gauge such that $\alpha$ is supported on a single edge, say, $(12)$. 
By Kato \cite[Thm 1.8]{Kato} or Rellich \cite[Thm. 1]{Rellich}, since this is a one-parameter analytic family of hermitian matrices, 
the ordered eigenvalues $(\lambda_1 \le \cdots \le \lambda_n)$
and eigenvectors $(\phi_1, \cdots, \phi_n)$ of $h$ extend analytically to 
eigenvalues and normalized eigenvectors $(\mu_{k}(t),\phi_{k}(t))_{k=1}^{n}$ of $h_{t}$, although apriori their order may not be preserved.
The derivative 
\begin{equation}\label{eqn-dotmu}
\dot \mu_k(t) = \frac{d}{dt}\mu_k(t)= \langle \phi_k(t), \dot h_t \phi_k(t)\rangle = -2\Cu(h_t, \phi_k(t))_{12}\end{equation} 
was calculated in (\ref{eqn-partial-derivative}). Since $\Cu(h_t, \phi_k(t))_{12}\ne0$ for all $k$ and all $t\in(0,\pi)$ by Lemma \ref{lem-Jnot0}, then each $\mu_{k}(t)$ is strictly monotone in $t\in[0,\pi]$.  If all eigenvalues are simple, this proves that $\lambda_k(h_t)=\mu_{k}(t)$ is monotone
for $t \in [0,\pi]$.

If the eigenvalue has a nontrivial multiplicity, say $\mu_k(t) =\mu_{k'}(t)$ then it suffices to know that the derivatives $-2\Cu(h_t, \phi_k(t))_{12}$ and $-2\Cu(h_t, \phi_k'(t))_{12}$ have the same signs. The second part of Lemma \ref{lem-Jnot0} ensures this is the case. \qed

\subsection{Remark} Lemma \ref{lem-Jnot0} and equation \eqref{eqn-dotmu} mean 
that the restriction of the Hermitian form $\dot h_t$ to the eigenspace of $h_t$ is sign-definite,
which is exactly the condition of \cite{GenericFamilies} for a point of multiplicity to be topologically regular (the BZ condition), see Appendix \ref{sec-BZ}.

\section{Genericity}\label{sec-genericity}
\begin{lem}\label{lem:genericity}
Let $G = G([n],E)$ be a finite simple connected graph.  The set of matrices 
\[ \mathcal O = \left\{h \in \mathcal S(G):\ h \text{ satisfies [GSC]  } \right\}\]
is open and dense in $\mathcal S(G)$.  Its
complement is contained in a closed semi-algebraic\footnote{A semi-algebraic subset of a real vector space is a finite Boolean combination of sets defined by
polynomial equalities $f(x) = 0$ and inequalities $f(x)<0$.} subset of $\mathcal S(G)$ of codimension $\ge 1$.
\end{lem}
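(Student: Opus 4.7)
The plan is to write $\mathcal{S}(G)\setminus\mathcal{O}$ as a finite union of real-algebraic hypersurfaces in $\mathcal{S}(G)$. Decompose [GSC] into: (a) strict support of $h$; (b) every eigenvalue of every signing of $h$ is simple; (c) every eigenvector of every signing is nowhere-vanishing. The sign-change maps $h \mapsto h'$ are affine involutions of $\mathcal{S}(G)$, and there are only $2^{|E|}$ signings, so it suffices to control the single-matrix versions of (b) and (c) for $h$ itself; the full bad set is then the union of the $2^{|E|}$ preimages under these involutions, still closed semi-algebraic of codimension $\geq 1$. Condition (a) fails on $\bigcup_{r\sim s}\{h_{rs}=0\}$, a finite union of hyperplanes.

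Condition (b) fails on the discriminant locus $\{\Delta(h) = 0\}$ where $\Delta(h) := \mathrm{disc}(p_h)$ is a polynomial in the entries of $h$. Since $\mathcal{S}(G)$ contains all diagonal matrices and a generic diagonal matrix has distinct entries (hence simple spectrum), $\Delta \not\equiv 0$ on $\mathcal{S}(G)$, so its zero set is a proper algebraic subset of codimension $\geq 1$. For condition (c), the key identity is that when $\lambda$ is a simple eigenvalue of $h$ with eigenvector $\phi$, we have $\phi_j = 0$ if and only if $\lambda$ is also an eigenvalue of the principal submatrix $h^{(j)}$ obtained by deleting the $j$-th row and column (the remaining entries of $\phi$ then give a $\lambda$-eigenvector of $h^{(j)}$). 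Therefore the failure of (c), outside the discriminant locus, is contained in $\bigcup_{j=1}^n \{R_j(h) = 0\}$, where $R_j(h) := \mathrm{Res}_\lambda(p_h(\lambda),\, p_{h^{(j)}}(\lambda))$ is a polynomial in the entries of $h$.

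The main obstacle is to show each $R_j$ is not identically zero on $\mathcal{S}(G)$, which is equivalent to showing $\mathcal{O} \neq \emptyset$. I would prove this via first-order perturbation theory. Begin with any $h_1 \in \mathcal{S}(G)$ of strict support and simple spectrum (e.g., a small generic off-diagonal perturbation of a diagonal matrix with distinct diagonal entries). For each pair $(k,j)$ with $\phi^{(k)}_j(h_1) = 0$, the first-order variation of $\phi^{(k)}_j$ along a direction $P \in \mathcal{S}(G)$ is the linear functional
\[ P \;\longmapsto\; \sum_{\ell \neq k} \frac{\langle \phi^{(\ell)}, P\, \phi^{(k)}\rangle}{\lambda_k - \lambda_\ell}\,\phi^{(\ell)}_j. \]
Using that $h_1$ has strict support on the connected graph $G$, that the eigenvectors $\{\phi^{(\ell)}\}$ form a basis of $\mathbb{R}^n$, and that the bilinear pairings $\langle \phi^{(\ell)}, P\phi^{(k)}\rangle$ span a sufficiently rich subspace as $P$ ranges over $\mathcal{S}(G)$, one verifies that this functional is not identically zero. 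Hence a generic small perturbation $h_1 + tP$ simultaneously moves every vanishing eigenvector coordinate off zero while preserving simplicity of the spectrum, producing an $h_0 \in \mathcal{O}$. Combining the three steps, $\mathcal{S}(G)\setminus\mathcal{O}$ lies in a finite union of proper closed real-algebraic hypersurfaces --- a closed semi-algebraic set of codimension $\geq 1$ --- and $\mathcal{O}$ is open and dense.
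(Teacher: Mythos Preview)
Your overall strategy---reduce to a single matrix via the finite group of signing involutions, then handle strict support, simple spectrum, and nowhere-vanishing eigenvectors separately by polynomial conditions---is correct and matches the paper. Conditions (a) and (b) are treated the same way in both.

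For condition (c) the two arguments diverge. The paper never tries to produce a good matrix; instead it builds the incidence variety
\[
\widetilde{\mathcal H}=\{(d,\lambda,[\phi'],b):\ (d-\lambda I)\phi'=0,\ b^t\phi'=0\}\subset \mathcal S(G')\times\RR\times\PP(\RR^{n-1})\times\RR^{n-1},
\]
counts its dimension, and concludes via Tarski--Seidenberg that the image in $\mathcal S(G)$ (the set of $h$ with an eigenvector vanishing at vertex $1$) has codimension $\ge 1$. Your resultant $R_j(h)=\mathrm{Res}_\lambda(p_h,p_{h^{(j)}})$ is a legitimate alternative, and the containment of the failure locus in $\bigcup_j\{R_j=0\}$ is correct.

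The gap is in showing $R_j\not\equiv 0$. Your first-order perturbation step asserts that the functional
\[
P\longmapsto \sum_{\ell\ne k}\frac{\langle\phi^{(\ell)},P\phi^{(k)}\rangle}{\lambda_k-\lambda_\ell}\,\phi^{(\ell)}_j
\]
is not identically zero on $\mathcal S(G)$, but this is not verified and is not routine. Writing $v=\sum_{\ell\ne k}\tfrac{\phi^{(\ell)}_j}{\lambda_k-\lambda_\ell}\phi^{(\ell)}$, vanishing of the functional forces (using diagonal $P$ and edge-supported $P$) that $v$ be supported on the \emph{interior} of the zero set $Z=\{i:\phi^{(k)}_i=0\}$, i.e.\ on vertices of $Z$ all of whose neighbors lie in $Z$. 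For a general starting $h_1$ this interior need not be empty, so you cannot conclude $v=0$. Starting from a ``generic'' perturbation of a diagonal matrix begs the question unless you say what ``generic'' avoids. A clean repair is to exhibit one explicit $h$ with every eigenvector entry nonzero: for instance $h=N\,\mathrm{diag}(1,\dots,n)+A_G$ with $N$ large, where connectedness of $G$ gives $\phi^{(k)}_j\sim c\,N^{-d(k,j)}$ with $c\ne 0$ (a sum over shortest paths). The paper's dimension count avoids this entire issue by bounding the bad set from above rather than exhibiting a point outside it.
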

\begin{proof}  When eigenvalues are simple, the eigenvalues and eigenvectors vary continuously with the matrix, so the set of matrices satisfying [GSC] is open.
If a matrix $h' \in \mathcal S(G)$ fails to satisfy [GSC] then there is a signing $\epsilon \in \{0, \pi\}^E$ for which $h = \epsilon *h'$ lies in at least one of the following sets:
\begin{enumerate}
    \item[(i)] The set of matrices in $\mathcal S(G)$ that are not strictly supported on $G$, that is, $h_{ij}=0 $ for some $(ij)\in E$.
    \item[(iii)] The set of matrices in $\mathcal S(G)$ that have a multiple eigenvalue: $\mathrm{discriminant}(h)=0$, or
    \item[(iii)] The set of matrices in $\mathcal S(G)$ that have a simple eigenvalue with an eigenvector that vanishes at some vertex. 
\end{enumerate}
The sets (i) and (ii) are zero sets of polynomials that are not the zero polynomial\footnote{To see that $\mathrm{discriminant}(h)\ne0$ for some $h\in\mathcal S(G)$ take $h=\Diag(1,2,\ldots,n)$. } on $\mathcal S(G)$, so these are algebraic subsets of $\mathcal S(G)$ with positive codimension in $\mathcal S(G)$.  Since the class of semi-algebraic subsets is
preserved by any change of signing, it suffices to show that the set (iii) is semi-algebraic with positive codimension in $\mathcal S(G)$.

Let us consider the set of $h \in \mathcal S(G)$ which admit an eigenvector $\phi = (0, \phi') \in \RR^n$ that vanishes on the first vertex.  
We may write $h$ in block form
\[ h = \left( \begin {matrix} a & b^t \\ b & d \end{matrix} \right)\]
with $a = h_{11}\in A=\RR$, with $b \in B = \RR^{n-1}$ and $d \in \mathcal S(G')$ where $G'$ is the graph obtained from $G$ by removing vertex no. 1.
Then $(h-\lambda.I)\phi = 0$ is equivalent to
\[ (d-\lambda.I)\phi' = 0\ \text{ and }b^t.\phi' = 0.\]
There is a diagram of algebraic sets
\begin{diagram}[size=2em]
&\widetilde{\mathcal H} \times A & \rTo & \widetilde{\mathcal H} & \rTo & 
\mathcal H & \rTo & \mathcal S(G') \\
&\dTo & & & & & & \\
&\mathcal S(G)
\end{diagram}
Here,
\begin{align*}
    \widetilde{\mathcal H} &= \left\{(d, \lambda, [\phi'], b)\in \mathcal S(G') \times \RR \times \PP(\RR^{n-1}) \times B:\ (d-\lambda.I)\phi' = 0 \text{ and } b^t.\phi' = 0\right\} \\
    \mathcal H &= \left\{ (d, \lambda, [\phi'])\in \mathcal S(G')\times \RR \times \PP(\RR^{n-1}):\ (d-\lambda.I)\phi' = 0 \right\}
\end{align*}
and we have written $[\phi'] \in \PP(\RR^{n-1})$ for the line defined by $\phi' \in \RR^{n-1}$.
We claim that
\[\dim(\widetilde{\mathcal H} \times A) = \frac{n(n+1)}{2}-1.\] 
This holds because: \begin{itemize}
    \item $\dim(\mathcal H) = \frac{(n-1)n}{2}$ since $\mathcal H \to \mathcal S(G')$ is generically $(n-1)$ to one. 
    (A generic element $d \in \mathcal S(G')$ has distinct eigenvalues 
$\lambda$, each with a unique projective eigenvector $[\phi']$).
\item $\dim(\widetilde{\mathcal H}) = \frac{(n-1)n}{2} + n-2 = \frac{n(n+1)}{2}-2$ since  
$\widetilde{\mathcal H} \to \mathcal H$ is a vector bundle whose fiber over $(d, \lambda, [\phi'])$
is the $n-2$ dimensional vector space $V_{[\phi']} = \left\{ b \in B:\ b^t.\phi' = 0\right\}$ (which depends only on the line $[\phi']$).    
\end{itemize}
The mapping $ \widetilde{H} \times A \to \mathcal S(G')$ is
\[ (d, \lambda, [\phi'],b,a) \to \left( \begin{matrix} a & b^t \\ b & d \end{matrix} \right).\]
Its image has dimension $\le \frac{n(n+1)}{2} -1$ and is exactly the set of $h \in \mathcal S(G)$ which have an eigenvector $\phi = (0,\phi')$ whose first coordinate 
vanishes.  By the Tarski-Seidenberg theorem, it is semi-algebraic.

Applying this argument to each coordinate gives $n$ such semi-algebraic sets, whose union is therefore also semi-algebraic of codimension $\ge 1$.
\end{proof}

\subsection{}
Let $G$ be a finite graph.
Let $\mathcal B$ denote the set of matrices $h\in \mathcal S(G)$ that satisfy 
\begin{itemize}
    \item[(\maltese)] 
any two gauge-inequivalent signings $\epsilon*h, \epsilon'*h$ have distinct eigenvalues.
\end{itemize}
\begin{lem} \label{lem-dag}
The set $\mathcal B$ is open and dense in $\mathcal S(G)$ 
and its complement is contained in an algebraic subset of codimension $\ge 1$.
\end{lem}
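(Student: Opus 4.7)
The plan is to follow the strategy of Lemma \ref{lem:genericity}: express the complement of $\mathcal{B}$ as a finite union of zero-loci of polynomials on $\mathcal{S}(G)$, and verify that each such polynomial is not identically zero. For each ordered pair $(\epsilon, \epsilon')$ of signings in $\{0,\pi\}^E$, set
\[
R_{\epsilon,\epsilon'}(h) := \mathrm{Res}_{\lambda}\bigl(\det(\lambda I - \epsilon*h),\ \det(\lambda I - \epsilon'*h)\bigr),
\]
a polynomial in the entries of $h$ whose vanishing at $h$ is equivalent to $\epsilon*h$ and $\epsilon'*h$ sharing an eigenvalue. Gauge-equivalent signings produce conjugate matrices and hence identical spectra for every $h$, so only gauge-inequivalent pairs need to be considered. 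There are finitely many such pairs, and the complement of $\mathcal{B}$ in $\mathcal{S}(G)$ is contained in $\bigcup_{[\epsilon]\neq[\epsilon']}\{R_{\epsilon,\epsilon'}=0\}$. It therefore suffices to show that each $R_{\epsilon,\epsilon'}$ is not the zero polynomial.

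To exhibit an $h \in \mathcal{S}(G)$ for which $\mathrm{spec}(\epsilon*h) \cap \mathrm{spec}(\epsilon'*h) = \emptyset$, I will use a perturbation of a diagonal matrix. Let $D = \mathrm{diag}(d_1, \ldots, d_n)$ have distinct entries, let $M \in \mathcal{S}(G)$ be strictly supported on $G$ with all edge-entries nonzero and zero diagonal, and set $h(t) = D + tM$. By Kato--Rellich \cite{Kato, Rellich}, each eigenvalue $\lambda_k(\epsilon*h(t))$ is analytic in $t$ with $\lambda_k(\epsilon*h(0)) = d_k$, and likewise for $\epsilon'$. Since the $d_k$ are distinct, for all sufficiently small $t > 0$ the values $\lambda_k(\epsilon*h(t))$ and $\lambda_{k'}(\epsilon'*h(t))$ with $k \neq k'$ lie in disjoint neighborhoods of $d_k$ and $d_{k'}$ and are thus unequal. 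Disjointness of the two spectra at $h(t)$ therefore reduces to checking $\lambda_k(\epsilon*h(t)) \neq \lambda_k(\epsilon'*h(t))$ for each $k$.

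This last step is where gauge-inequivalence enters. The quotient $\delta := \epsilon/\epsilon'$ represents a non-trivial class in $H^1(G;\mathbb{Z}/2)$, so some simple cycle $\gamma$ in $G$ satisfies $\prod_{e\in\gamma}\delta_e = -1$. Rayleigh--Schr\"odinger perturbation theory expresses the $t^m$-coefficient of $\lambda_k(\epsilon*h(t))$ as a weighted sum $\sum_w \epsilon(w)\, W(w;D,M)$ over closed walks $w$ of length $m$ based at $k$, where $\epsilon(w) = \prod_e \epsilon_e^{n_e(w)}$ depends only on the parity pattern of the edge-multiplicities $n_e(w)$. Since $G$ is connected, for each vertex $k$ one can construct a closed walk $w^*$ at $k$ that follows a path to a vertex of $\gamma$, traverses $\gamma$ once, and returns along the reverse path; its odd-parity edge-set is exactly $\gamma$, so $\epsilon(w^*) = -\epsilon'(w^*)$. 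The main technical obstacle is a non-cancellation verification: at order $t^{|w^*|}$, the coefficient of the monomial $\prod_e M_e^{n_e(w^*)}$ in $\lambda_k(\epsilon*h(t)) - \lambda_k(\epsilon'*h(t))$ receives contributions only from walks sharing $w^*$'s parity pattern (since $\epsilon(w)$ depends only on the parity class), each coming with sign difference $\pm 2$ between the two signings, and for generic $D$ these weighted contributions do not sum to zero. Once this is established, the two analytic functions of $t$ differ, whence $\lambda_k(\epsilon*h(t)) \neq \lambda_k(\epsilon'*h(t))$ for small $t > 0$, and therefore $R_{\epsilon,\epsilon'}(h(t)) \neq 0$. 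Hence $R_{\epsilon,\epsilon'}\not\equiv 0$; the finite union over gauge-inequivalent pairs is a proper algebraic subset of $\mathcal{S}(G)$ of codimension $\geq 1$, and its complement $\mathcal{B}$ is open and dense.
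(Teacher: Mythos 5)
Your overall framing --- reduce the statement to showing that a single polynomial $R_{\epsilon,\epsilon'}$ (the paper uses the discriminant of $(\epsilon*h)\oplus(\epsilon'*h)$, which serves the same purpose) is not identically zero, and produce a witness $h$ near a diagonal matrix with distinct entries so that only same-index coincidences $\lambda_k(\epsilon*h)=\lambda_k(\epsilon'*h)$ need to be ruled out --- matches the paper's strategy up to that point. The divergence, and the gap, is in how you rule out the same-index coincidence. You expand $\lambda_k(\epsilon*h(t))$ by Rayleigh--Schr\"odinger perturbation theory as a sum over closed walks and claim that the contribution of the parity class of a walk $w^*$ traversing a cycle $\gamma$ with $\prod_{e\in\gamma}\delta_e=-1$ does not cancel for generic $D$. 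You explicitly flag this non-cancellation as ``the main technical obstacle'' and then assume it. But that assertion is essentially the whole content of the lemma: many closed walks based at $k$ share the parity pattern and even the exact edge-multiplicity vector of $w^*$, and their weights $W(w;D,M)$ are signed rational functions of the $d_i$ (products of energy denominators with the signs and bracketing structure of the perturbation series), so their sum could in principle vanish identically; showing it does not requires a genuine argument that you have not supplied. As written, the proof is incomplete at its crucial step.

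The paper avoids this combinatorial difficulty entirely. For a pair of signings differing on a single cycle edge, it connects $\epsilon*h$ to $\epsilon'*h$ by the magnetic path $t\mapsto\alpha_t*h$ with $\int_\gamma\alpha_t=t$, $t\in[0,\pi]$, and invokes the Monotonicity Lemma \ref{lem-monotonicity}: $\lambda_k$ is strictly monotone along this path, so the endpoint values are distinct --- no series expansion is needed. General gauge-inequivalent pairs are then reduced to this adjacent case by degenerating $G$ to a subgraph $G'$ with $\beta(G')=1$ (letting $h_{rs}\to 0$ on the remaining non-tree edges) and using continuity of the discriminant. To salvage your route you would either need to actually prove the non-cancellation (for instance by treating the $d_i$ as formal variables and isolating a contribution with a unique pole structure), or, more simply, replace that step with the monotonicity argument.
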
  
\begin{proof}    
 We may assume that $G$ is connected.  If $G$ is a tree and $h \in \mathcal S(G)$
then every signing of $h$ is gauge equivalence to $h$ so we may assume $\beta(G) \ge 1$.  
First consider the case that $\beta(G) =1$ so that $G$ contains a unique cycle.  Fix an edge $(rs)$ in this cycle.  For
any $h \in \mathcal S(G)$ there is only one
gauge-equivalence class of signings $\epsilon*h$ of $h$ and it corresponds to changing the sign of $h_{rs}$ (and of $h_{sr}$).

Let $Q_{\epsilon}(h)$ denote the discriminant of the $2n\times 2n$ matrix $(h)\oplus(\epsilon*h)$.  
The set $Q_{\epsilon}^{-1}(0)$ is an algebraic subset of $\mathcal S(G)$ which contains the complement of $\mathcal B$.  
If $Q_{\epsilon}^{-1}(0)$  contains an open subset
of $\mathcal S(G)$ then it is all of $\mathcal S(G)$;  otherwise it has codimension $\ge 1$.
We will assume that $Q_{\epsilon}(h)=0$ for all $h \in \mathcal S(G)$ and arrive at a contradiction.

In this one dimensional case the hypercube of \S \ref{subsec-combinatorics}  is just an interval whose endpoints 
are $h$ and $\epsilon*h$.  Let $V={\rm diag}(1,2,\ldots,n)$. Let $\xi\in \mathcal S(G)$ (strictly supported on $G$)
sufficiently small such that $h:=V+\xi \in \mathcal O$ 
and $\epsilon*h \in \mathcal O$ (such $\xi$ exists by Lemma \ref{lem:genericity}).  The eigenvalues of $h$ are distinct; the eigenvalues of
$\epsilon*h$ are distinct.  Therefore, if $Q_{\epsilon}(h) = 0$
then $h$ and $\epsilon*h$ share an eigenvalue, say,  $\lambda_k(h) = \lambda_{k'}(\epsilon*h)$ .  If $\xi$ is
sufficiently small, the eigenvalues of $h$ and of $\epsilon*h$ are small perturbations of the eigenvalues
of $V$, which are distinct integers, hence $k = k'$.
But this contradicts the montonicity Lemma \ref{lem-monotonicity}.

We conclude that for any graph $G$ with $\beta(G)=1$ the function $Q_{\epsilon}$ vanishes identically on $\mathcal S(G)$.
Now consider the case of a general graph $\beta(G)\ge 1$.  
For a general signings $\epsilon,\epsilon' \in \{0, \pi\}^{\beta}$,
set 
\[Q_{\epsilon,\epsilon'}(h) ={\rm discr}\left((\epsilon*h) \oplus (\epsilon'*h)\right).\]  
The complement of $\mathcal B$ is contained in the algebraic set
\[ Z:= \bigcup_{\begin{smallmatrix}\epsilon, \epsilon' \in \{0,\pi\}^{\beta}\\\epsilon\ne \epsilon' \end{smallmatrix} } 
Q_{\epsilon,\epsilon'}^{-1}(0) = 
\left( \prod_{\begin{smallmatrix}\epsilon, \epsilon' \in \{0,\pi\}^{\beta}\\\epsilon\ne \epsilon' \end{smallmatrix}}
Q_{\epsilon,\epsilon'}\right)^{-1}(0).\]
The set $Z$ is a finite union of sets of the form $Q_{\epsilon}^{-1}(0)$.  
To see that each of these sets has codimension $\ge 1$ suppose otherwise.  Then there exists a signing $\epsilon$ so that
that $Q_{\epsilon}(h) = 0$ for all $h \in \mathcal S(G)$.

Choose a spanning tree in $G$.  Label the edges $e_1,e_2,\cdots, e_{\beta}$ in the complement and express $\epsilon = \sum \epsilon_ie_i$
as in \S \ref{spanning_tree} and \S \ref{subsec-combinatorics}.  Arrange the labeling so that $\epsilon_1 \ne 0$.
The graph $G'$ obtained from $G$ by removing
the edges $e_2, e_3, \cdots, e_{\beta}$ has $\beta(G') = 1$.  The signing $\epsilon$ of $G$ becomes a signing $\eta = \epsilon_1$
on $G'$, that is, a change of sign on the remaining edge $e_1$.  Moreover, any $h' \in \mathcal S(G')$ can be obtained as a limit
of $h \in \mathcal S(G)$ by allowing $h_{rs} \to 0$ where $(rs)$ varies over  the edges 
$e_2, e_3, \cdots, e_{\beta}$.  Since $Q_{\epsilon}(h)$ is a continuous function of $h$,
it vanishes on this limiting value, $h'$.  This proves that $Q_{\eta}(h') = 0$ for all $h' \in \mathcal S(G')$ which contradicts
the conclusion from the first paragraph. 
\end{proof}

\section{Proofs of Propositions \ref{prop-part1} and \ref{prop-part2}}\label{sec-proofs}

\subsection{Proof of Proposition \ref{prop-part1}}
Recall from \S \ref{subsec-outline} that $G$ is a simple connected graph with disjoint cycles,
$h \in \mathcal S(G)$ is generic in the sense of [GSC], and $\Lambda_k: \TT^{\beta} \to \RR$
is $\Lambda_k(\alpha) = \lambda_k(\alpha*h)$. 

It was shown in \cite{NodalCount, CdV, AG} that
each $\epsilon \in \{0,\pi\}^{\beta}\subset \TT^{\beta}$ is a nondegenerate critical point
of $\Lambda_k$ and its Morse index equals the nodal surplus.  In Corollary \ref{cor-diagonal}
it is shown that the Hessian of $\Lambda_k$ is diagonal with respect to the decomposition
 (\ref{eqn-torus-decomposition}).
Therefore the Morse index at $\epsilon\in \{0,\pi\}{^\beta}$ 
is the number of segments in the 1-skeleton of the
Boolean lattice that start at $\alpha$ and descend.  By the montonicity 
Lemma \ref{lem-monotonicity}, this is the same as the
number of segments whose endpoints have a lower eigenvalue, which is $|J_-(\alpha)|$.  \qed

\subsection{}
  A main tool that we will use in proving Proposition \ref{prop-part2}
is the local-global theorem of 
\cite{LocalTest}, which can be stated in a simplified manner as follows:
\begin{thm}\cite[Theorem 3.10]{LocalTest}\label{local-global}
Suppose $G$ is a simple, connected graph and $h\in\mathcal{S}(G)$ has a simple eigenvalue $\lambda_{k}(h)$ with a nowhere-vanishing eigenvector. Let $J\subset[\beta]$, let $\TT_{J}\subset\TT^{\beta}$ be the subtorus spanned by $\{e_{j}\}_{j\in J}$, and consider the restriction of $\Lambda_k$ to the subtorus $\TT_J$ (with $\Lambda_k(\alpha)
= \lambda_k(\alpha*h)$ as before).   Then, $\alpha=0$ is a local minimum (resp. maximum) of $\Lambda_{k}$ on $\TT_J$ if and only if it is a global minimum (resp. maximum) on $\TT_J$.
\end{thm}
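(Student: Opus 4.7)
This theorem is cited from \cite[Theorem 3.10]{LocalTest}, so the task is to sketch the strategy I would pursue rather than to reconstruct the original proof. Only the forward direction ``local min $\Rightarrow$ global min'' needs argument; the converse is tautological, and the maximum case follows by replacing $h$ with $-h$. The plan is to argue by contradiction: suppose $\alpha = 0$ is a local minimum of $\Lambda_k|_{\TT_J}$, but there exists $\alpha^\star \in \TT_J$ with $\Lambda_k(\alpha^\star) < \Lambda_k(0)$. The goal is to use the data at $\alpha^\star$ to produce a direction in $T_0 \TT_J$ along which $\Lambda_k$ strictly decreases at $0$, which would contradict the local-minimum hypothesis.

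The main tool will be the Courant--Fischer variational characterization
\[ \Lambda_k(\alpha) \;=\; \min_{\substack{V \subset \CC^n \\ \dim V = k}} \, \max_{0 \ne v \in V} \frac{\langle v, (\alpha*h)v\rangle}{\|v\|^{2}}. \]
Let $V_\star$ be the span of the first $k$ eigenvectors of $\alpha^\star * h$, and let $\alpha(t)$, $t \in [0,1]$, be a real-analytic path in $\TT_J$ from $0$ to $\alpha^\star$. I would transport $V_\star$ back along this path into a real-analytic family $V(t)$ of $k$-planes with $V(1) = V_\star$, and examine the function
\[ f(t) \;=\; \max_{0 \ne v \in V(t)} \frac{\langle v, (\alpha(t)*h)v\rangle}{\|v\|^{2}} \;\ge\; \Lambda_k(\alpha(t)). \]
Since $f(1) = \Lambda_k(\alpha^\star) < \Lambda_k(0) \le f(0)$ and $f$ is real-analytic off a thin set, the hope is that $f(t) < f(0)$ on an interval adjacent to $0$. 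This would give $\Lambda_k(\alpha(t)) < \Lambda_k(0)$ near $0$, producing the descent direction.

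The main obstacle I anticipate is controlling $f$ uniformly along the path: a priori $f(t)$ could spike above $f(0)$ at intermediate $t$ before dipping down to $f(1)$, which would invalidate the naive descent argument. Circumventing this seems to require either a clever choice of path or an induction on $|J|$. The base case $|J|=1$ reduces $\TT_J$ to a circle, where the probability-current formula of Proposition \ref{prop-current}(5) combined with the complex-conjugation symmetry should force the eigenvalue curve between the two symmetry points $0$ and $\pi$ to be unimodal. The inductive step would slice $\TT_J$ by codimension-one subtori through $0$ and combine the slice-wise conclusions. The delicate technical point, and where the real content of \cite{LocalTest} presumably lies, is that the simple-eigenvalue and nowhere-vanishing-eigenvector hypotheses must be shown to propagate along a useful family of such slices, something the genericity result (Lemma \ref{lem:genericity}) of the present paper suggests should hold in an open-dense sense but does not directly provide.
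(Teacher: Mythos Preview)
Your proposal does not reach a proof, and the gap is not merely technical. You yourself flag the central obstacle---that the function $f(t)$ obtained by transporting a test $k$-plane along a path can overshoot $f(0)$---and then defer to an induction on $|J|$ whose inductive step you do not supply. More seriously, your proposed base case $|J|=1$ is false as stated: complex-conjugation symmetry plus Proposition~\ref{prop-current}(\ref{item-derivative}) gives only that $t\mapsto\Lambda_k(t)$ is even about $t=0$ and $t=\pi$, not that it is unimodal on $[0,\pi]$. Unimodality is the content of the Monotonicity Lemma~\ref{lem-monotonicity}, which requires both the disjoint-cycles hypothesis and the full [GSC] condition---neither of which is assumed in Theorem~\ref{local-global}. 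So the base of your induction already fails in the generality needed.

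The paper's argument is entirely different and avoids paths altogether. One writes $h_\alpha = S + \sum_{j\in J} R_j(\alpha_j)$ where each $R_j(t)$ is a rank-one Hermitian matrix supported on a $2\times 2$ block (equation~(\ref{eq 2 by 2})), chosen so that $R_j(0)\phi=0$ and the semidefinite sign of $R_j$ is independent of $t$. If $m$ of the $R_j$ are negative semidefinite, Weyl's inequality gives $\lambda_{k-m}(S)\le\lambda_k(h_\alpha)$ for \emph{every} $\alpha\in\TT_J$. The remaining work is to show $\lambda_{k-m}(S)=\lambda_k(h)$ when $\alpha=0$ is a local minimum; this is done by recognizing $S-\lambda$ and $\tfrac12\Hess(\Lambda_k)(0)$ as the two Schur complements of a single block matrix $M$ (equation~(\ref{eqn-M})) and applying the Haynsworth inertia formula. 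The local-minimum hypothesis enters only to force $\mathrm{ind}\bigl(\tfrac12\Hess(\Lambda_k)(0)\bigr)=0$, which pins down the index of $S-\lambda$ exactly. No continuity or path argument is used.
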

The statements in \cite{LocalTest} involve a different but equivalent graph model, and apply in a situation of greater generality, where the eigenvector is permitted to vanish at various vertices. We therefore provide the proof for Theorem \ref{local-global}, adapted to our situation, in the Appendix. Theorem \ref{local-global} together with
 the monotonicity lemma gives the following:

\begin{cor}  \label{lem-Tplus}  Fix $h \in \mathcal S(G)$.
Fix $\alphab \in \sym$ and write $h_{\alphab} = \alphab*h$.
Let $\TT_-(\alphab)$ denote the sub-torus of $\TT^{\beta}$ that is spanned by those basis elements $e_j$ for
$j\in J_-(\alphab)$ and similarly for $\TT_+(\alphab)$.  Then 
\begin{align*}    
\lambda_k(\alpha*h_{\alphab}) &\le \lambda_k(h_{\alphab}) \text{ for any } \alpha \in \TT_-(\alphab)\\
\lambda_k(\alpha*h_{\alphab}) & \ge \lambda_k(h_{\alphab}) \text{ for any } \alpha \in \TT_+(\alphab).
\qed
\end{align*}  
\end{cor}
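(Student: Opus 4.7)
The plan is to reduce the statement to the local-global theorem (Theorem \ref{local-global}) by first showing that $0 \in \TT^{\beta}$ is a local extremum of the translated eigenvalue function $\Lambda_k^{\alphab}(\alpha) := \lambda_k(\alpha * h_{\alphab})$ on each of the subtori $\TT_{\pm}(\alphab)$, and then upgrading local to global.

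First I would observe that $\Lambda_k^{\alphab}(\alpha) = \Lambda_k(\alpha + \alphab)$, so $\Lambda_k^{\alphab}$ is simply a translate of $\Lambda_k$ by the symmetry point $\alphab$. Since $h_{\alphab} = \alphab * h$ is itself a signing of $h$, every signing of $h_{\alphab}$ is again a signing of $h$; thus $h_{\alphab}$ inherits [GSC] from $h$. Applying Proposition \ref{prop-part1} with $h_{\alphab}$ in place of $h$, the point $\alpha = 0$ is a nondegenerate critical point of $\Lambda_k^{\alphab}$, and its Hessian is diagonal in the basis $e_1, \ldots, e_{\beta}$.

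Next I would pin down the sign of each diagonal entry of the Hessian. For each $j \in [\beta]$, consider the one-parameter family $f_j(t) := \Lambda_k^{\alphab}(t e_j) = \lambda_k((\alphab + t e_j)*h)$ for $t \in [0,\pi]$, which traces a single segment in the 1-skeleton of the hypercube $\square$. By the monotonicity lemma (Lemma \ref{lem-monotonicity}) applied to the cycle $\gamma_j$ containing the edge that corresponds to $e_j$, the function $f_j$ is strictly monotone on $[0,\pi]$. Since $f_j'(0) = 0$ and the Hessian is nondegenerate, the sign of $f_j''(0)$ agrees with the sign of $f_j(\pi) - f_j(0)$: by the definition of $J_-(\alphab)$, this is negative precisely when $j \in J_-(\alphab)$, and positive otherwise.

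The final step is a direct application of Theorem \ref{local-global}. Restricting $\Lambda_k^{\alphab}$ to $\TT_-(\alphab)$, the Hessian at $0$ has only the diagonal entries indexed by $J_-(\alphab)$, all of which are negative by the previous step; hence $0$ is a strict local maximum of $\Lambda_k^{\alphab}$ on $\TT_-(\alphab)$. Since $\lambda_k(h_{\alphab})$ is simple with nowhere-vanishing eigenvector (by [GSC] for $h_{\alphab}$), Theorem \ref{local-global} upgrades this to a global maximum on $\TT_-(\alphab)$, yielding $\lambda_k(\alpha*h_{\alphab}) \le \lambda_k(h_{\alphab})$ for all $\alpha \in \TT_-(\alphab)$. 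The same argument, with all signs reversed, produces the companion inequality on $\TT_+(\alphab)$. I do not anticipate a serious obstacle; the only point that warrants care is verifying that [GSC] persists after passing from $h$ to the signing $h_{\alphab}$, which is immediate from the definition.
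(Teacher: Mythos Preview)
Your proof is correct and follows essentially the same route as the paper's (suppressed) argument: reduce to the signing $h_{\alphab}$, use Proposition~\ref{prop-part1} to get a diagonal nondegenerate Hessian at $0$, use the monotonicity lemma to match the sign of each diagonal entry with membership in $J_{\pm}(\alphab)$, and then apply Theorem~\ref{local-global} on the subtorus $\TT_{\pm}(\alphab)$. Your write-up is in fact slightly more explicit than the paper's about why the sign of $f_j''(0)$ coincides with the sign of $f_j(\pi)-f_j(0)$, which is a nice touch.
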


\quash{
\begin{proof}
It suffices to prove this for $\alphab = 0$ since $h_{\alphab}$ is a signing of $h$.
Similarly take
$J_{\pm}= J_{\pm}(\epsilon)$.  Then $\alpha = 0$ is a critical point of $\Lambda_k(\alpha) = \lambda_k(\alpha*h)$.
Since the Hessian of $\Lambda_k(\alpha) = \lambda_k(\alpha*h)$  is diagonal (Proposition \ref{prop-part2}) with 
respect to the decomposition
(\ref{eqn-torus-decomposition}), the Morse index of $\Lambda_k$ is the number of directions on the
1-skeleton in which $\Lambda_k$ is locally decreasing.  By the monotonicity lemma, this equals the number
of neighboring vertices $\epsilon*h$ for which $\lambda_k(\epsilon*h) < \lambda_k(h)$.  Therefore these directions
span the sub-torus $\TT_-=\TT_{J_{-}}$.  Theorem \ref{local-global} guarantees that $\lambda_k(\alpha*h) \le \lambda_k(h)$ 
for all $\alpha \in \TT_{J_-}$.  The case of $\TT_+$ is similar.\end{proof}
} 

\subsection{Proof of Proposition \ref{prop-part2}}
Suppose $G$ is simple, connected, and has disjoint cycles, and suppose that $h$
is generic in the sense of [GSC].  
Let $\alphab, \alphab' \in \sym$.  We need to show that
\begin{equation}\label{eqn-Jplus} J_+(\alphab) = J_+(\alphab') \iff \alphab = \alphab'.  \end{equation} 
The definition of $J_{\pm}(\epsilon)$ implicitly requires a choice of $k \in [n]$ and $h \in \mathcal S(G)$ 
so to be explicit we temporarily denote it $J_{\pm}(\epsilon, k, h)$.  For fixed $\epsilon, k$ this 
set is constant (in $h$) on connected components of the open set 
$\mathcal O$ of $h \in \mathcal S(G)$ which satisfy
condition [GSC], because the eigenvalues $\lambda_k(h), \lambda_k(\epsilon*h)$ vary continuously with $h$. As a result, it is enough to prove the statement for $h\in \mathcal{B}\cap\mathcal{O}$ as this is set is dense in $\mathcal{O}$ by Lemma \ref{lem-dag}. Recall that $h\in \mathcal{B}\cap\mathcal{O}$ if and only if it satisfies [GSC] and the 
condition (\maltese) which we repeat here:
\begin{enumerate}
    \item[(\maltese)]  For each $k \in [n]$ the eigenvalue $\lambda_k$ takes distinct values on distinct 
    gauge-equivalence classes of signings of $h$.
\end{enumerate}
Thus we may assume that $h$ satisfies [GSC] and (\maltese).  
Given $\alphab, \alphab' \in \sym$ suppose $J_+(\alphab) = J_+(\alphab')$. Assume for the sake of contradiction that  $\alphab\ne\alphab'$ so $\Lambda_{k}(\alphab)\ne\Lambda_{k}(\alphab')$. Assume that $\Lambda_{k}(\alphab)<\Lambda_{k}(\alphab')$ and let us show that there is $\alphab''$ such that $\Lambda_{k}(\alphab)>\Lambda_{k}(\alphab'')>\Lambda_{k}(\alphab')$ which provides the needed contradiction. Since $J_+(\alphab) = J_+(\alphab')$ then the intersection $\TT_{+}(\alphab')\cap\TT_{-}(\alphab)$ contains a signing, call it  $\alphab''$. Then, 
 Corollary  \ref{lem-Tplus} implies $\Lambda_k(\alphab'') > \Lambda_k(\alphab')$ because $\alphab''\in\TT_+(\alphab')$. However, $\Lambda_k(\alphab'') < \Lambda_k(\alphab)$ because $\alphab''\in\TT_{-}(\alphab)$.\qed

\appendix
\section{Proof of Theorem \ref{local-global}}
\subsection{}

We follow the proof in \cite{LocalTest} but reorder the steps. 
Theorem \ref{local-global} begins with a real symmetric matrix $h\in \mathcal S(G)$.
Recall that the choice of edge $(r_j,s_j)\in \gamma_j$ determines a basis $e_1,e_2,\cdots,e_{\beta}$ of $\TT^{\beta} = \RR^{\beta}/(2\pi i \ZZ)^{\beta}$.
The subset $J \subset [\beta]$ determines the subtorus $\TT_J 
\subset \TT^{\beta}$ which is spanned by the coordinates $e_j$ for $j \in J$. 
We therefore have an analytic
family of magnetic perturbations, $h_{\alpha} = \alpha*h$ for $\alpha \in \TT_{J}$, 
and an eigenvalue function
$\Lambda_{k}:\TT_{J}\to\RR$ defined by $\Lambda_{k}(\alpha):=\lambda_{k}(h_{\alpha})$. Since $\lambda = \lambda_k(h)$ is a simple eigenvalue, 
the function $\Lambda_k$ is analytic near $\alpha = 0$ and is piecewise analytic on all of $\TT_J$.  We may choose the corresponding eigenvector $\phi$ of $h$ to be real. By assumption, it is nowhere vanishing.

The point $\alpha = 0$ is a critical point of $\Lambda_k$.  Assume it is a local minimum. Theorem \ref{local-global} states that it is 
also a global minimum.  (The case of a maximum can be proven analogously.) So we need to show 
\begin{equation}\label{eq: global min}
    \lambda\le \lambda_{k}(h_{\alpha})\ \text{ for all}\ \alpha\in\TT_{J}.
\end{equation}

\subsection{}
The proof in \cite{LocalTest} involves several auxiliary matrices.
 Holding $\phi$ constant, the function $\langle \phi, h_{\alpha}\phi\rangle:\TT_{J} \to \RR$ has a
critical point at $\alpha = 0$ (cf. equation (\ref{eqn-partial-derivative})) and we set
\[ \Omega = \left.\frac{1}{2}\Hess\left( \langle \phi, h_{\alpha}\phi \rangle \right)\right|_{\alpha=0}.\]
The matrix $\Omega$ is a real diagonal $|J|\times|J|$ matrix. It is diagonal since each entry of $h_{\alpha}$ depends on at most one $\alpha_{j}$ coordinate, so $\frac{\partial^2 h_{\alpha}}{\partial \alpha_{i}\partial \alpha_{j}}=0$ for $i\ne j$. It is real and invertible since its diagonal entries are 
\begin{equation}\label{eqn-Omegajj}
\Omega_{jj} = -h_{r_js_j}\phi(r_j)\phi(s_j)\ne 0\end{equation}
as calculated in (\ref{eqn-ddoth}). (Recall that both $h$ and $\phi$ are real, and $\phi$ is nowhere-vanishing.)

 For each $j\in J$ let $R_{j}(t)$ be the hermitian $n \times n$ matrix supported on the block 
 \[ \left[r_{j}r_{j},r_{j}s_{j};s_{j}r_{j},s_{j}s_{j}\right]\]
 on which it is given by
    \begin{equation}\label{eq 2 by 2}
 R_j(t)\quad  =\quad    h_{r_js_j}   \begin{pmatrix}
 -\frac{\phi(s_{j})}{\phi(r_{j})} & e^{it}\\
 e^{-it} &  -\frac{\phi(r_{j})}{\phi(s_{j})}
      \end{pmatrix}. 
    \end{equation}
To ease notation let us assume that $J = \{1, 2, \cdots, |J|\}$. Writing
$\alpha = (\alpha_1, \alpha_2, \cdots, \alpha_{J}) \in \TT_J$, the sum
\[ \sum_{j\in J}R_j(\alpha_j)\]
is a family of Hermitian $n \times n$ matrices depending on $\alpha \in \TT_J$.

    Define the real symmetric $n\times n$ (constant) matrix $S$ by
    \begin{equation}\label{S}
        S=h - \sum_{j\in J}
        R_j(0).
    \end{equation}
This collection of matrices satisfies the following properties:
\begin{enumerate}[label=(\alph*)]
     \item For any $\alpha =(\alpha_1,\alpha_2,\cdots,\alpha_{|J|}) \in \TT_{J}$, 
\[ h_{\alpha}= \alpha*h = S + \sum_{j\in J} 
        R_j(\alpha_j).\]
    \item For any $j\ne j'$, $R_{j}(t)$ and $R_{j'}(t')$ commute for all $t,t'$.
    \item $R_{j}(0)\phi=0$ for every $j\in J$, and hence $S\phi=\lambda\phi$.
      \item $\det(R_j(t)) = 0$ so $R_j(t)$ has rank one
\item The semi-definite sign of $R_j(t)$ is independent of $t$ since ${\rm trace}(R_j(t)) = 2\Omega_{jj}$.
\end{enumerate}
(see equation (\ref{eqn-Omegajj})).  Let 
\[m=\left\vert\{j\in J\ :\  -h_{r_is_i}\phi(r_i)\phi(s_i)<0\}\right\vert= \mathrm{ind}(\Omega)\]
be the number of negative semi-definite $R_{j}$'s. Then the sum of these commuting rank-one matrices has $m$ negative eigenvalues and $n-|J|>0$ zero eigenvalues (recalling that $|J|\le \beta<n$ by the assumption of disjoint cycles), so    
\[\lambda_{m+1}\left(\sum_{j\in J} 
        R_j(\alpha_{j})\right)=0\ \text{ for all }\ \alpha \in \TT_{J} .\]
        The Weyl inequalities for $ h_{\alpha}= S + \sum_{i=1}^{|J|} 
        R_j(\alpha_j)$ may be expressed as follows,
\begin{diagram}[size=1.3em]
\lambda_p(S) + \lambda_q(\Sigma R_j)\ & \le\ & \lambda_k(S+\Sigma R_j)\ & \le\ & \lambda_s(S) + \lambda_r(\Sigma R_j) \\
(p+q\le k+1) && && (k+n \le r+s )
\end{diagram}
Only the first inequality is required for the case of a local minimum.  Taking $q = m+1$ gives
        \begin{equation}\label{eq: Weyl enq}
           \lambda_{k-m}(S)
        \le \lambda_{k}(h_{\alpha})\ \text{ for all }\ \alpha \in \TT_{J}.
        \end{equation}
By \eqref{eq: global min} the proof of Theorem \ref{local-global} now comes down to the following statement:
\begin{lem}\label{S lemma}
    If $\alpha = 0$ is a local minimum of $\Lambda_k(\alpha)$ then $\lambda_{k-m}(S)=\lambda_{k}(h)=\lambda$. 
\end{lem}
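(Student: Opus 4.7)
The plan is to establish $\lambda_{k-m}(S) = \lambda$ under the local-minimum hypothesis. The lower Weyl bound \eqref{eq: Weyl enq}, evaluated at $\alpha = 0$, already gives $\lambda_{k-m}(S)\le\lambda$, so the real content is producing the reverse inequality $\lambda_{k-m}(S)\ge\lambda$ from local minimality.

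First I would note that $\lambda$ is itself an eigenvalue of $S$: by property (c), $R_{j}(0)\phi=0$ for every $j\in J$, whence $S\phi = h\phi - \sum_{j}R_{j}(0)\phi = \lambda\phi$. Write $\lambda = \lambda_{p}(S)$ for some index $p$. Combining the Weyl lower bound with the analogous upper Weyl bound $\lambda_{k}(h)\le\lambda_{k+|J|-m}(S)$ (take $r=n-|J|+m$ in the right-hand Weyl inequality displayed in the excerpt, so that $\lambda_{r}(\sum R_{j})=0$) pins $p$ to $k-m\le p\le k+|J|-m$. The lemma is equivalent to $p=k-m$, i.e., $S$ has exactly $k-m-1$ eigenvalues strictly less than $\lambda$.

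Assume for contradiction that $p>k-m$, so that $\lambda_{k-m}(S)<\lambda$ and the spectral subspace $E_{<\lambda}(S)\subset\RR^n$ has dimension at least $k-m$. I would then construct a $k$-dimensional subspace $V\subset\RR^{n}$ and a small $\alpha\in\TT_{J}$ along which the maximum Rayleigh quotient of $h_{\alpha}$ on $V$ falls strictly below $\lambda$; by the min--max characterization $\lambda_{k}(h_{\alpha})\le\max_{x\in V,\,\|x\|=1}\langle h_{\alpha}x,x\rangle$, this contradicts local minimality. The candidate is $V:=U+N$, where $U\subset E_{<\lambda}(S)$ has dimension $k-m$ and $N$ is the $m$-dimensional span of the negative-eigenvalue eigenvectors $\psi_{j}$ of $R_{j}(0)$ for the $m$ indices with $\Omega_{jj}<0$. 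By the disjointness of the chosen cycle edges, these $\psi_{j}$ have mutually disjoint supports within the $2$-blocks on $(r_{j},s_{j})$, are mutually orthogonal, and each is orthogonal to $\phi$; for generic $U$ we have $U\cap N=\{0\}$ and $\dim V=k$. The structural input that makes this work is that the spectrum of $\sum_{j}R_{j}(\alpha_{j})$ is the fixed multiset $\{0^{(n-|J|)},\tau_{1},\dots,\tau_{|J|}\}$ independent of $\alpha$; only the orientation of its rank-$|J|$ range rotates inside the disjoint $2$-blocks as $\alpha$ varies, and a suitable small $\alpha$ rotates the positive $R_{j}$-ranges nearly away from $V$ while keeping the negative ranges aligned with $N\subset V$, yielding the required strict descent.

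The main obstacle I expect is the uniform control of the cross terms in $\langle h_{\alpha}x,x\rangle$ over unit $x=u+w\in V$ with $u\in U$ and $w\in N$, especially the $(u,w)$-mixing that $R_{j}(\alpha_{j})$ introduces once $\alpha_{j}\ne 0$. A cleaner alternative is to compute the Hessian of $\Lambda_{k}$ at $\alpha=0$ directly via second-order Rayleigh--Schr\"odinger perturbation theory, obtaining $\mathrm{Hess}(\Lambda_{k}) = 2\Omega + 2P^{T}RP$, where $P$ is the real $n\times|J|$ matrix with columns proportional to $h_{r_{j}s_{j}}\bigl(\phi(s_{j})e_{r_{j}}-\phi(r_{j})e_{s_{j}}\bigr)$ and $R$ is the reduced resolvent $(\lambda I-h)^{-1}$ on $\phi^{\perp}$, which carries $k-1$ positive and $n-k$ negative eigenvalues. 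The combinatorial balance $p=k-m$ is precisely the Sylvester-signature condition under which $2\Omega+2P^{T}RP\succeq 0$; otherwise a direction of negative curvature exists, contradicting the local minimum. Either route comes down to delicate spectral bookkeeping matching the $m$ negative $\Omega_{jj}$'s against the position of $\lambda$ in the spectrum of $S$, and this bookkeeping is the heart of the proof.
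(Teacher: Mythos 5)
Your second route is essentially the paper's route, and you have correctly identified the key objects: $\tfrac12\Hess(\Lambda_k)(0)=\Omega-B^T(h-\lambda)^+B$ (this is Proposition \ref{prop-Schur}, with your $P$ equal to $B$ and your reduced resolvent equal to $-(h-\lambda)^+$), and the lemma amounts to showing that $S$ has exactly $k-1-m$ eigenvalues strictly below $\lambda$. But the decisive step is exactly the one you defer: you assert that ``$p=k-m$ is precisely the Sylvester-signature condition under which $2\Omega+2P^TRP\succeq0$'' and then state that this bookkeeping is the heart of the proof without carrying it out. The missing idea is the Haynsworth inertia-additivity formula applied to the bordered matrix $M$ of (\ref{eqn-M}): since $B^T\phi=0$, the inertia of $M$ can be computed through either generalized Schur complement, giving
\[ \mathrm{ind}(h-\lambda)+\mathrm{ind}\bigl(M/(h-\lambda)\bigr)=\mathrm{ind}(\Omega)+\mathrm{ind}(M/\Omega), \]
i.e.\ $(k-1)+\mathrm{ind}\bigl(\tfrac12\Hess(\Lambda_k)(0)\bigr)=m+\mathrm{ind}(S-\lambda)$. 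Local minimality kills the Hessian index, so $\mathrm{ind}(S-\lambda)=k-1-m$, and since $S\phi=\lambda\phi$ this forces $\lambda_{k-m}(S)=\lambda$. Without this (or an equivalent congruence/interlacing argument) the lemma is not proved: the Weyl inequalities you invoke only give $\lambda_{k-m}(S)\le\lambda$ and cannot by themselves rule out strict inequality.

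Your first route, as described, would not work. To contradict local minimality you may only use $\alpha$ arbitrarily close to $0$, and a small $\alpha$ rotates the ranges of the rank-one blocks $R_j(\alpha_j)$ only infinitesimally, so you cannot ``rotate the positive $R_j$-ranges nearly away from $V$.'' Moreover, at $\alpha=0$ any $k$-dimensional test space $V$ already satisfies $\max_{x\in V,\|x\|=1}\langle hx,x\rangle\ge\lambda_k(h)=\lambda$, so any gain must come from the second-order term in $\alpha$ --- which returns you to the Hessian computation of your second route, and hence to the inertia count above.
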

The proof involves the next few paragraphs.

\subsection{ }

Holding $\phi$ constant gives a mapping $ih_{\alpha}\phi:\TT_{J} \to \CC^n$. Define $B$ to be its 
derivative $B = \left. iD(h_{\alpha}\phi)\right|_{\alpha=0}$.  It is a real $n \times |J|$ matrix with   
\begin{equation*}\label{eqn-Bvj}
B_{vj} = \left.\frac{\partial}{\partial \alpha_j} \left(h_{\alpha}\phi  \right)_v\right|_{\alpha=0}
= \begin{cases}
    -h_{r_js_j}\phi(r_j) &\text{ if } v=s_j\\
    h_{s_jr_j} \phi({s_j}) &\text{ if } v=r_j\\
    0 & \text{ otherwise}
\end{cases}
\end{equation*}
A direct but messy calculation involving double subscripts as in \cite[Lemma 2.7]{LocalTest} gives
\begin{equation}\label{eq: S as shur}
    \sum_{j\in J}R_{j}(0)=B\Omega^{-1}B^{T},\ \text{ and therefore }\ S=h-B\Omega^{-1}B^{T}.
\end{equation}

\subsection{}A primary insight in \cite{LocalTest} is the identification of the generalized Schur complements
in the real symmetric $(n+|J|)\times(n+|J|)$ matrix
\begin{equation}\label{eqn-M}
M = \left(\begin{matrix}
 h-\lambda& B \\B^T & \Omega 
\end{matrix}\right).
\end{equation}
These complements are defined to be
\begin{align*}
    M/(h-\lambda) = \Omega - B^T(h-\lambda)^+B \\
M/\Omega = (h-\lambda) - B \Omega^{-1}B^T=S-\lambda.\end{align*}
where ``$+$" denotes the Moore-Penrose pseudo-inverse.\footnote{The Moore-Penrose pseudo-inverse of a real symmetric matrix $A$ is zero on $(\rm{Im}(A))^{\perp}$ and is the inverse of the isomorphism $(\ker(A))^{\perp} \to {\rm{Im}}(A)$.}
 
\begin{prop}\cite[Lemma 2.3]{LocalTest}\label{prop-Schur} The Schur complement to $h-\lambda$ may be identified,
\[ M/(h-\lambda)= \frac{1}{2}\Hess(\Lambda_{k}(0)).\]
\end{prop}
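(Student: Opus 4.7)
The statement is a direct second-order perturbation calculation for a simple eigenvalue, and the strategy is to expand both sides of the claimed identity in coordinates and match them entry by entry. The key observation is that since $h$ is real symmetric and $\phi$ real, the first-order variation of $\Lambda_k$ vanishes in every direction in $\TT_J$ (each probability current $\JJ_{r_j s_j}$ is zero at $\alpha=0$), so $\alpha=0$ is automatically a critical point and the Hessian has its usual coordinate-free meaning.

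The plan is first to derive, by standard perturbation theory, an explicit formula for $\partial_i\partial_j \Lambda_k(0)$. Starting from $(h_\alpha - \mu(\alpha))\phi(\alpha)=0$ with $\phi(0)=\phi$ and $\|\phi(\alpha)\|=1$, differentiating once and using $\partial_j \mu|_0 = \langle \phi,\partial_j h\,\phi\rangle = 0$ gives
\[ \partial_j\phi\big|_0 = -(h-\lambda)^{+}\,\partial_j h\,\phi, \]
where $(h-\lambda)^{+}$ applies because $\partial_j h\,\phi \perp \phi$. Differentiating once more and pairing with $\phi$, the term involving $\partial_i\partial_j\phi$ drops out since $(h-\lambda)\phi = 0$, and one obtains
\[
\partial_i\partial_j \Lambda_k(0) \;=\; \langle \phi,\partial_i\partial_j h\,\phi\rangle \;-\; 2\,\mathrm{Re}\,\langle \partial_i h\,\phi,(h-\lambda)^{+}\partial_j h\,\phi\rangle .
\]

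Next I would identify each of the two terms on the right-hand side with the corresponding piece of $M/(h-\lambda) = \Omega - B^T(h-\lambda)^{+}B$. The first term, halved, is by definition $\tfrac12\Hess\bigl(\langle \phi,h_\alpha\phi\rangle\bigr)_{ij}\big|_0 = \Omega_{ij}$: for $i\ne j$ this vanishes because $h_\alpha$ depends on $\alpha_i$ and $\alpha_j$ in disjoint matrix entries, and for $i=j$ it gives exactly the diagonal $\Omega_{jj} = -h_{r_j s_j}\phi(r_j)\phi(s_j)$ computed in equation \eqref{eqn-Omegajj}. For the second term, I would use the definition $B_{vj} = i\,(\partial_j h\,\phi)_v\big|_0$: since $\partial_j h = ih_{r_j s_j}A[r_js_j]$ is purely imaginary (as a matrix), the vector $\partial_j h\,\phi$ is purely imaginary, so each $B_{vj}$ is real. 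Writing $(\partial_j h\,\phi)_v = -iB_{vj}$ and using that $(h-\lambda)^{+}$ is real symmetric, a short calculation shows
\[
\mathrm{Re}\,\langle \partial_i h\,\phi,(h-\lambda)^{+}\partial_j h\,\phi\rangle \;=\; \sum_{u,v} B_{ui}\,(h-\lambda)^{+}_{uv}\,B_{vj} \;=\; \bigl(B^T(h-\lambda)^{+}B\bigr)_{ij}.
\]
Combining the two identifications and dividing by $2$ yields $\tfrac12\Hess(\Lambda_k)(0) = \Omega - B^T(h-\lambda)^{+}B = M/(h-\lambda)$, as claimed.

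The only subtle point is purely bookkeeping: tracking the $i$'s in the definition of $B$ and verifying that all the resulting expressions land in the real symmetric world. The pseudo-inverse is unambiguous because $\lambda$ is simple, so $\ker(h-\lambda) = \mathrm{span}\{\phi\}$, and the vectors $\partial_j h\,\phi$ are orthogonal to $\phi$ (first-order criticality), placing them in the image of $(h-\lambda)$ where $(h-\lambda)^{+}$ acts as a genuine inverse. The hardest step is really nothing more than organizing this computation cleanly; there is no conceptual obstacle beyond matching conventions, which is why the paper can refer to \cite[Lemma 2.7]{LocalTest} for the parallel calculation.
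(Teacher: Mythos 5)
Your proposal is correct and follows essentially the same route as the paper: both compute $\Hess(\Lambda_k)(0)$ by second-order perturbation theory for the simple eigenvalue, identify $\tfrac12\langle\phi,\partial_i\partial_j h\,\phi\rangle$ with $\Omega_{ij}$, and identify the remaining term $-\Re\langle\partial_i h\,\phi,(h-\lambda)^{+}\partial_j h\,\phi\rangle$ with $-\bigl(B^T(h-\lambda)^{+}B\bigr)_{ij}$. The only cosmetic difference is that the paper evaluates the quadratic form $\langle\eta,\Hess(\Lambda_k)(0)\eta\rangle$ along a one-parameter family and writes $\dot\phi+(h-\lambda)^{+}\dot h\phi=c\phi$ (allowing for the phase ambiguity of the normalized eigenvector) rather than your exact formula for $\partial_j\phi$; since $\dot h\phi\perp\phi$, this does not affect the final identity.
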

The proof in \cite{LocalTest} requires \cite{Kato} (Remark II.2.2 p.~81) but it is actually elementary and we provide it here for completeness.   
The Lemma is equivalent to the statement that
\[ \frac{1}{2}\langle \eta, \Hess(\Lambda_{k}(0))\eta\rangle = \langle \eta,\Omega\eta\rangle-\langle B\eta,(h-\lambda)^+B\eta\rangle\ \text{ for all } \eta \in T_{0}\TT^{J}=\RR^{J}.\]

To calculate $\langle \eta, \Hess(\Lambda_{k}(0) \eta \rangle$, choose an analytic one parameter
family $\alpha_t$ with $\eta = \dot\alpha(0)$ and write $h_t = \alpha_t*h$ with simple eigenvalue $\Lambda_{k}(\alpha_t)$ and normalized eigenvector $\phi_t$ (so that $\phi = \phi_0$). 
From (\ref{eqn-first-derivative}) and (\ref{ddot-mu}) the second derivative is
\[\langle \eta, \Hess(\Lambda_{k}(0) \eta \rangle=\left. \frac{d^2}{dt^2}\Lambda_{k}(\alpha_t)\right|_{t=0} =\left. \frac{d^2}{dt^2}\left(\langle \phi_t, h_{t}\phi_t \rangle\right)\right|_{t=0}= \left.\langle \phi, \ddot h \phi \rangle + 2 \Re[\langle \phi, \dot h \dot \phi \rangle]\right|_{t=0}
,\]
where $\dot h = \frac{d}{dt}h_t|_{t=0},\ \dot \phi = \frac{d}{dt} \phi_t|_{t=0} $, and $\ddot h = \ddot h_t|_{t=0}$ (This is the formula from \cite{Kato} that is referenced in \cite{LocalTest}.)  The first term agrees with the first term in $2\langle\eta, ( M/h-\lambda) \eta \rangle$: 
\[ \frac{1}{2}\langle \phi, \ddot h \phi \rangle =
\frac{1}{2}\frac{d^2}{dt^2}\left. \langle \phi, h_t \phi \rangle\right|_{t=0} =
\langle \eta, \Omega \eta \rangle.
\]
The $t$-derivative of $ih_t\phi$ (keeping $\phi$ fixed) is
\[B\eta= iD_{\eta}(h_{\alpha}\phi) = i\dot h\phi\]
 So we need to compare
\[
-\langle \eta, B^*(h - \lambda)^+ B \eta \rangle = 
-\langle \dot h \phi, (h - \lambda)^+ \dot h \phi \rangle
\]
with $\langle \phi, \dot h \dot \phi \rangle =\langle \dot h \phi, \dot \phi \rangle$.  
From (\ref{eqn-first-derivative}),
\[ \dot \phi + (h-\lambda)^+\dot h \phi = c \phi\]
for some constant $c$, because $\phi$ spans the (one dimensional) kernel of $h-\lambda$.  Taking the inner product
with $\dot h \phi$ and using (\ref{eqn-partial-derivative}) with $\dot \lambda = 0$ gives
\[ \langle \dot h \phi, \dot \phi \rangle + \langle \dot h \phi, (h-\lambda)^+ \dot h \phi
\rangle = 0\]
as claimed. \qed
\subsection{Proof of Lemma \ref{S lemma}}
The Hainsworth theorem for the matrix $M$ in (\ref{eqn-M}) gives,
\[\mathrm{ind}(M)=\mathrm{ind}(M/(h-\lambda))+\mathrm{ind}(h-\lambda)=\mathrm{ind}(M/\Omega)+\mathrm{ind}(\Omega),\]
which yields
\[\mathrm{ind}(S-\lambda)=\mathrm{ind}(M/(h-\lambda))+k-1-m.\]
Since $\alpha = 0$ is a local minimum of $\Lambda_k$, Proposition \ref{prop-Schur}
gives $\mathrm{ind}\left(M/(h-\lambda)\right)= 0$.  Property (c) of the matrices
$R_j(t)$ implies $\lambda$ is an eigenvalue of $S$.  Therefore
$\lambda_{k-m}(S-\lambda) = 0$.\qed

\section{The BZ condition} \label{sec-BZ} 
The argument in Lemma \ref{lem-Jnot0}  concerning eigenvalues with nontrivial multiplicity is essentially
the same as that of Theorem 1.5 in \cite{GenericFamilies}, which we state here for completeness because
it is an important observation about singular critical points that may appear.
We are interested in the Morse theory of the composition $\Lambda_k:\TT^{\beta} \to \RR$,
\[ \begin{CD}
\TT^{\beta} @>>> \mathcal H(G) @>>{\lambda_k}> \RR. \end{CD}\]

Fix $\alpha \in \TT^{\beta}$ and suppose that $\lambda_k(\alpha*h)$ is an eigenvalue of multiplicity $m\le \beta$.  
Let $V$ denote the $m$-dimensional eigenspace.  Consider the set of all Hermitian forms on $V$ that are given by
\begin{equation}\label{eqn-TangentForm}
\langle \phi, \frac{d}{dt}[(\alpha+tv)*h] \nu \rangle \text{ for } \phi,\nu \in V\end{equation} 
as $v$ varies within $T_{\alpha}\TT^{\beta}$.  According to Theorem 1.5
in \cite{GenericFamilies}, {\em if there exists $v \in T_{\alpha}\TT^{\beta}$ such that
the form (\ref{eqn-TangentForm}) is positive definite} (which we refer to as the BZ condition), then the point
$\alpha \in \TT^{\beta}$ is topologically regular, meaning that for sufficiently small $\delta >0$ the set $\TT^{\beta}_{\le \lambda - \delta}$ is a strong deformation retract of $\TT^{\beta}_{\le \lambda + \delta}$.
(Here, $\TT^{\beta}_{\le t} = \left\{ \alpha' \in \TT^{\beta}:\ \Lambda_k(\alpha')\le t\right\}$ and
$\lambda = \Lambda_k(\alpha)$.)


\begin{thebibliography}{}

\bibitem{LiorFirstPaper} 
L. Alon, R. Band, G. Berkolaiko, Nodal Statistics on Quantum Graphs,
Comm. Math. Phys., {\bf 362} (2018), 909-948.

\bibitem{Universality}Alon, L., Band, R. \& Berkolaiko, G. Universality of nodal count distribution in large metric graphs. {\em Experimental Mathematics}. pp. 1-35 (2022)

\bibitem{AG} L.~Alon, M.~Goresky, Morse theory for discrete magnetic operators and
nodal count distributiion for graphs, in press, J. Spectral Theory.

\bibitem{Band}Band, R. The nodal count {0, 1, 2, 3,…} implies the graph is a tree. {\em Philosophical Transactions Of The Royal Society A: Mathematical, Physical And Engineering Sciences}. \textbf{372}, 20120504 (2014)


\bibitem{NodalCount} G.~Berkolaiko, Nodal count of graph eigenvectors via magnetic perturbation. Analysis and PDE 6 (5)
(2013).

\bibitem{LocalTest}
G.~Berkolaiko, Y.~Canzani, G.~Cox,  J.~L.~Marzuola, A local test for global extrema in the dispersion relation of a periodic graph. Pure Appl. Anal. 4 (2) 257 - 286, 2022. https://doi.org/10.2140/paa.2022.4.257

\bibitem{GenericFamilies}  G.~Berkolaiko and I. Zelenko, Morse inequalities for ordered 
eigenvalues of generic families of self-adjoint matrices.	arXiv:2304.04331 [math.SP]

\bibitem{CdV} Y.~Colin de Verdi\`ere, Magnetic interpretation of the nodal defect on graphs, 
Analysis \& PDE 6 (2013), 1235-1242.

\bibitem{Fiedler}Fiedler, M. Eigenvectors of acyclic matrices. {\em Czechoslovak Mathematical Journal}. \textbf{25}, 607-618 (1975)

\bibitem{Kato} T.~Kato, {\bf Perturbation Theory for Linear Operators}, Grundlehren {\bf 132} (1980), Springer Verlag,
Berlin.

\bibitem{Rellich} F.~Rellich, {\bf Perturbation Theory of Eigenvalue Problems}, Gordon and Breach, N.Y., 1969.
\end{thebibliography}
\end{document}